\def\<{\left\langle}
\def\>{\right\rangle}
\def\tab{\hskip .166667in}
\def\tabb{\hskip .333333in}
\newtheorem{theorem}{Theorem}
\newtheorem{lemma}{Lemma}
\newtheorem{definition}{Definition}
\theoremstyle{remark}
\newtheorem{remark}{Remark}
\begin{document}

\title{Dense Crystalline Dimer Packings of Regular Tetrahedra}

\author{Elizabeth R. Chen}
\email{bethchen@umich.edu}
\affiliation{Department of Mathematics, University of Michigan, Ann Arbor
  Michigan 48109, USA}

\author{Michael Engel}
\email{engelmm@umich.edu}
\affiliation{Department of Chemical Engineering, University of Michigan,
  Ann Arbor Michigan 48109, USA}

\author{Sharon C. Glotzer}
\email{sglotzer@umich.edu}
\altaffiliation{Department of Materials Science and Engineering, University
  of Michigan, Ann Arbor Michigan 48109, USA}
\affiliation{Department of Chemical Engineering, University of Michigan,
  Ann Arbor Michigan 48109, USA}

\begin{abstract}
We present the densest known packing of regular tetrahedra with density $\phi
= \tfrac{4000}{4671} = 0.856347\ldots$ Like the recently discovered packings
of Kallus et al.\ and Torquato-Jiao, our packing is crystalline with a unit
cell of four tetrahedra forming two triangular dipyramids (dimer clusters). We
show that our packing has maximal density within a three-parameter family of
dimer packings. Numerical compressions starting from random configurations
suggest that the packing may be optimal at least for small cells with up to 16
tetrahedra and periodic boundaries.
\end{abstract}

\date{Submitted: December 27, 2009; Revised: \today}

\keywords{Crystallography; Packing; Regular solid; Hilbert problem}

\maketitle

\section{Introduction}

The problem of the packing of tetrahedra, which has modern-day applications to
such wide-ranging topics as metamaterials with novel optical properties,
nanomaterials \cite{kotov1,kotov2,kotov3}, and virus formation, hails back to
the early Greeks \cite{chenthesis}. Aristotle, in discussing the assignment of
geometrical figures to ``heavenly bodies'', mistakenly believed that regular
tetrahedra (``pyramids'') tile (Euclidean) space perfectly. Nearly 1800 years
later, Johannes M\"uller (aka Regiomontanus, 1436-1476) contradicted
Aristotle's claim \cite{struik25,senechal81}.  400 years after M\"uller,
Minkowski showed that the densest lattice packing of any convex body must
satisfy certain constraints \cite{minkowski96} and mistakenly argued that the
densest lattice packing of tetrahedra (i.e. one tetrahedron per unit cell) had
density $\phi = \tfrac{9}{38}$ \cite{minkowski12}. In 1900, Hilbert posed the
problem of the densest lattice packing of regular tetrahedra as a special case
of the 18th of his famous list of problems \cite{hilbert00}. Gr\"omer
conjectured \cite{gromer62}, and Hoylman later proved \cite{hoylman70}, that
the densest lattice packing of a single tetrahedron is $\phi =
\tfrac{18}{49}$.

In 1972, Stanislaw Ulam conjectured that spheres would have the lowest maximum
packing density of all convex bodies, including tetrahedra
\cite{gardner01}. That density is $\phi_{\rm
  sphere}=\tfrac{\pi}{\sqrt{18}}=0.740480\ldots$ \cite{hales05}. In 2000,
Betke and Henk developed an efficient computer algorithm to compute the
densest lattice packing of any convex body, and applied it to the Archimedean
solids \cite{betke00}. Conway and Torquato \cite{conway06} used Betke and
Henk's algorithm to compute the packing density of tetrahedra derived from the
densest lattice packing of icosahedra, and also examined other promising
packings.  All of them packed worse than spheres, with a maximum packing
fraction $\phi\approx 0.7175$. This led them to suggest the tantalizing
possibility that the Ulam conjecture might, in fact, fail for tetrahedra. They
further proposed that tetrahedra might have the lowest packing density of all
convex bodies. Motivated by this, Chaikin and coworkers \cite{chaikin07}
performed experiments on tetrahedral dice packed in spherical and cylindrical
containers.  They reported random packings with densities as high as
$0.75\pm0.03$ [16] and $0.76\pm0.02$ [17], with the error arising from the
rounded dice vertices and edges \cite{jaoshvili10}.

In 2008, Chen proposed a packing of nonamers (nine tetrahedra forming two
pentagonal dipyramids that share one tetrahedron) arranged in layers, whose
density ($\phi \approx 0.7786$) clearly exceeded, for the first time, the
maximum packing density of spheres \cite{chen08}. Since Chen's publication, a
flurry of activity has resulted in successively higher and higher packing
densities for tetrahedra. It was soon demonstrated that the nonamer crystal
could be numerically compressed to slightly higher packing densities ($\phi =
0.7837$ \cite{akbariAPS09} and $\phi = 0.7820$ \cite{torquatoNature09}).  The
first {\it disordered} (i.e.\ non-periodic) packing of tetrahedra to exceed
the maximum packing density of spheres was reported by Haji-Akbari et
al.\ \cite{akbariAPS09,akbariNature09}; they obtained a packing density of
$\phi = 0.7858$ using Monte Carlo simulation of systems of 8000 tetrahedra
compressed from random initial conditions.  Their disordered packing contains
a preponderance of pentagonal dipyramids, arranged randomly and thus
differently than in the Chen structure, along with other motifs such as
icosahedra.

Previously, all dense ordered packings proposed for tetrahedra were based on
analytical construction (or numerical compressions thereof).  Haji-Akbari et
al.\ showed, using MC simulations of initially random systems containing up to
nearly 22,000 tetrahedra, that at packing densities $\phi > 0.5$ an
equilibrium fluid of hard tetrahedra spontaneously transforms to a dodecagonal
quasicrystal, which can be compressed to $\phi = 0.8324$
\cite{akbariNature09}. By numerically constructing and then compressing four
unit cells of a periodic quasicrystal approximant with an 82-tetrahedron unit
cell, they obtained a packing density as high as $\phi = 0.8503$.

Motivated by a numerical search, Kallus, Elser and Gravel \cite{kallus09}
found a one-parameter family of dimer cluster packings obtained via an
analytical construction with a density of
$\phi=\tfrac{100}{117}=0.854700\ldots$, exceeding the density of the
quasicrystal approximant. Each dimer cluster contains two face-sharing
tetrahedra, and is equivalent to a triangular dipyramid (or bipyramid). Two
dimers comprise a single unit cell. Torquato and Jiao generalized the
analytical construction of Kallus et al.\ to a {\it two}-parameter family of
packings, and obtained an even denser packing with
$\phi=\tfrac{12250}{14319}=0.855506\ldots$ \cite{torquato09}.

In this paper, we further generalize the Kallus et al.\ and Torquato-Jiao
family of dimer cluster packings to consider a {\it three}-parameter family of
packings.  We obtain a maximum packing density of
$\phi=\tfrac{4000}{4671}=0.856347\ldots$ as the optimal solution within the
three-parameter family and show that the Kallus et al.\ and Torquato-Jiao
packings are special cases of our more general construction.  Furthermore, we
present numerical compression simulations of small systems of tetrahedra using
standard isobaric Monte Carlo with variable simulation cell shape.  For
systems with 4, 8, 12 and 16 tetrahedra with initially random configurations,
the analytically predicted dimer crystal with $\phi = 0.856347$ is recovered.

\section{Analytical construction}

\subsection{Double dimer configurations}

\begin{definition}
A \emph{dimer (cluster)} is a basic building block formed by two face-sharing
regular tetrahedra arranged into a triangular dipyramid. The vertices of a
\emph{positive dimer} are chosen up to translation as
\begin{equation}\begin{split}
o&=\<+2,+2,+2\>,\\
p=\<+2,-1,-1\>,\quad q&=\<-1,+2,-1\>,\quad r=\<-1,-1,+2\>,\\
s&=\<-2,-2,-2\>,
\end{split}\end{equation}
such that $p$, $q$, $r$ span the common face. A \emph{negative dimer} is
related to a positive dimer by inversion. The positive dimer with vertices
$o$, $p$, $q$, $r$, $s$ centered at the origin is called $+{\bf F}_2$; the
corresponding negative dimer with vertices $-o$, $-p$, $-q$, $-r$, $-s$
centered at the origin is called $-{\bf F}_2$.
\end{definition}

\begin{definition}
A crystalline arrangement of tetrahedra with four tetrahedra in the unit cell
forming one negative dimer and one positive dimer is called a \emph{double
  dimer configuration}. If the configuration is free of overlaps, then it is a
\emph{double dimer packing}.
\end{definition}

The set of possible double dimer configurations $\mathcal{P}$ is a set of
translates of the two dimers. We place the positive dimer $+{\bf F}_{2}$ at
the origin and translate it by the lattice ${\bf L}^{+}$. The negative dimer
$-{\bf F}_{2}$ is placed by the lattice coset ${\bf L}^{-}$.  Without loss of
generality we can assume that ${\bf L}^{+}$ is spanned by $a+b$, $b+c$, $c+a$
and choose the offset of ${\bf L}^{-}$ as $d+a$ (equivalently $d+b$, or
$d+c$), where the lattice vectors ($a$, $b$, $c$) and the offset vector
$d$ are free to vary.  This means that
\begin{equation}\begin{split}
{\bf L}^{+}&=\{n_aa+n_bb+n_cc\mid n_a+n_b+n_c=0\bmod{2}\},\\
{\bf L}^{-}&=\{n_aa+n_bb+n_cc\mid n_a+n_b+n_c=1\bmod{2}\}+d.
\end{split}\end{equation}
We say a dimer is part of the \emph{$n$-th layer} with $n=n_a+n_b+n_c$. The
even (odd) layers contain positive (negative) dimers.

The lattice and offset vectors contain 12 coordinate variables, which means
the set $\mathcal{P}$ is a 12-dimensional linear space. However many of the
configurations have overlaps, and thus are not packings.

\subsection{Three-parameter family}

In order to verify that the dimers of a double dimer configuration are not
overlapping (i.e.\ that it is a packing), intersections of dimers with their
neighbors are studied. It is sufficient to consider the dimer at the origin:
\begin{lemma}
The space group of a double dimer configuration acts transitively on the
dimers.
\end{lemma}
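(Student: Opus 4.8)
The plan is to exhibit, for each dimer in a double dimer configuration, an element of the space group that carries $+{\bf F}_2$ (placed at the origin) onto that dimer. Since the dimers are partitioned into even layers (positive dimers, indexed by ${\bf L}^+$) and odd layers (negative dimers, indexed by ${\bf L}^-$), it suffices to handle two cases: reaching an arbitrary positive dimer, and reaching an arbitrary negative dimer.

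First I would dispose of the even-layer case. Every positive dimer is a translate of $+{\bf F}_2$ by a vector $n_aa+n_bb+n_cc$ with $n_a+n_b+n_c\equiv 0\pmod 2$, i.e.\ by an element of ${\bf L}^+$. These pure translations are by construction lattice symmetries of the whole configuration (they permute ${\bf L}^+$ among itself and permute the coset ${\bf L}^-$ among itself), hence lie in the space group, and they act transitively on the positive dimers. The slightly more delicate case is reaching the negative dimers: here I would use that a negative dimer is obtained from a positive dimer by a point inversion. Concretely, inversion through the midpoint between the origin dimer and a chosen neighboring negative dimer — say the one at offset $d+a$ — sends $+{\bf F}_2$ to that negative dimer; one then checks this inversion is a symmetry of the configuration by verifying it maps ${\bf L}^+$ (centered at $0$) onto ${\bf L}^-$ (centered at $d+a$) and vice versa. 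This works because inversion negates the parity-defining sum $n_a+n_b+n_c$ up to the fixed shift coming from the offset, interchanging the two parity classes, and because inversion through the appropriate center is compatible with the lattice translations already shown to be symmetries. Composing this one inversion with the translations of ${\bf L}^+$ then reaches every negative dimer.

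The main obstacle — really the only point requiring care rather than bookkeeping — is checking that the candidate inversion genuinely preserves the configuration and not merely the pair of dimers one used to define it; that is, one must confirm the center of inversion is chosen so that the image lattice coset coincides exactly with ${\bf L}^-$ (including the offset $d+a$), using the freedom that the three offsets $d+a$, $d+b$, $d+c$ are equivalent. Once that compatibility is verified, transitivity on all dimers follows immediately by composing the inversion with translations from ${\bf L}^+$, and the lemma is proved. As a corollary, to check a double dimer configuration is a packing it suffices to examine the overlaps of the single dimer at the origin with its neighbors, which is exactly the reduction the subsequent analysis exploits.
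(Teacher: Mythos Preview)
Your proposal is correct and follows the same approach as the paper: the key step is the point inversion $x\mapsto d+a-x$ (inversion through $\tfrac12(d+a)$), which swaps ${\bf L}^{+}$ and ${\bf L}^{-}$ while interchanging positive and negative dimers, after which transitivity follows by composing with the lattice translations in ${\bf L}^{+}$. The paper's proof records only the inversion formula in a single line; your write-up spells out the parity check on $n_a+n_b+n_c$ and the verification that the inversion preserves the full configuration, which the paper leaves implicit.
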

\begin{proof}
The inversion $x\mapsto d+a-x$ maps positive dimers to negative dimers.
\end{proof}

Intersections can occur between three types of simplices forming the surface
of a dimer: triangular faces, edges, and vertices. Given two simplices, we say
the \emph{incidence condition} for these simplices holds if their intersection
is non-empty. Incidence conditions are of either face-to-face, face-to-edge,
face-to-vertex, edge-to-edge, edge-to-vertex or vertex-to-vertex type.

Next, we replace the simplices by affine subspaces and use the notation
$V[p]=\{p\}$ to be the vertex $p$,
$E[p,q]=\{p+\kappa(q-p):\kappa\in\mathbbm{R}\}$ to be the line containing the
edge from $p$ to $q$, and
$F[p,q,r]=\{p+\kappa(q-p)+\lambda(r-p):\kappa,\lambda\in\mathbbm{R}\}$ to be
the plane containing the triangular face spanned by $p$, $q$, and $r$. We say
the \emph{linear incidence condition} holds if the intersection of two affine
subspaces is non-empty. In the following we will consider linear incidence
conditions only. They are easier to handle analytically and can, in certain
cases, be used as sufficient conditions for the absence of overlaps.

\begin{lemma}\label{lem:reducedim}
Given the nine linear incidence conditions
\begin{equation}\begin{split}
{\bf G}^{+}_a:&\quad F[o,p,q]\cap(d-F[o,p,q]+a)\neq\emptyset,\\
{\bf G}^{+}_b:&\quad F[o,q,r]\cap(d-F[o,q,r]+b)\neq\emptyset,\\
{\bf G}^{+}_c:&\quad F[o,r,p]\cap(d-F[o,r,p]+c)\neq\emptyset,\\
{\bf G}^{+}_{a+b+c}:&\quad F[o,r,p]\cap(d-F[o,r,p]+a+b+c)\neq\emptyset,\\
{\bf G}^{-}_a:&\quad F[s,r,q]\cap(d-F[s,r,q]-a)\neq\emptyset,\\
{\bf G}^{-}_b:&\quad F[s,p,r]\cap(d-F[s,p,r]-b)\neq\emptyset,\\
{\bf G}^{-}_c:&\quad F[s,q,p]\cap(d-F[s,q,p]-c)\neq\emptyset,\\
{\bf G}^{-}_{a+b+c}:&\quad F[s,q,p]\cap(d-F[s,q,p]-a-b-c)\neq\emptyset,\\
{\bf G}_{a+b}:&\quad E[o,q]\cap(E[s,r]+a+b)\neq\emptyset,
\end{split}\end{equation}
the collection $\mathcal{P}' = \{p\in \mathcal{P}: \text{the conditions } {\bf
  G}^{\pm}_a, {\bf G}^{\pm}_b, {\bf G}^{\pm}_c, {\bf G}^{\pm}_{a+b+c}, {\bf
  G}_{a+b} \text{ hold}\}$ is a three-dimensional linear space.
\end{lemma}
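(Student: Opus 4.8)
The plan is to convert each of the nine incidence conditions into a single affine-linear equation in the twelve coordinates of the vectors $a,b,c,d$, and then to verify that the resulting $9\times 12$ coefficient matrix has rank $9$. Once this is done the lemma is immediate: a consistent system of $9$ linearly independent affine-linear equations on $\mathcal{P}\cong\mathbbm{R}^{12}$ has as its solution set an affine subspace of dimension $12-9=3$, and consistency is automatic as soon as the rank equals the number of equations (the associated linear map $\mathbbm{R}^{12}\to\mathbbm{R}^{9}$ is then onto). (Here ``linear space'' in the statement is meant in the affine sense; the equations below will have nonzero right-hand sides.)

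To linearize the conditions, the essential point is that the vertices $o,p,q,r,s$ are fixed, so the planes $F[o,p,q],F[o,q,r],F[o,r,p]$ (through the apex $o$) and $F[s,r,q],F[s,p,r],F[s,q,p]$ (through the apex $s$), as well as the lines $E[o,q]$ and $E[s,r]$, are fixed affine subspaces; only the translation vectors they are compared against involve the parameters. For a plane $F$ with normal $n$, the point reflection $-F$ has the same direction plane as $F$, so for any vector $v$ the set $d-F+v$ is parallel to $F$; two parallel planes meet if and only if they coincide, i.e.\ if and only if $n\cdot(d+v)=2\,n\cdot x_{0}$ for a point $x_{0}\in F$. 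Applying this with $v$ equal to $a,b,c,a+b+c$ and to $-a,-b,-c,-a-b-c$ turns the eight conditions ${\bf G}^{\pm}_{a},{\bf G}^{\pm}_{b},{\bf G}^{\pm}_{c},{\bf G}^{\pm}_{a+b+c}$ into eight affine-linear equations; I would record the six normals once and for all, namely $(1,1,-1),(-1,1,1),(1,-1,1)$ for the faces through $o$ and $(-5,1,1),(1,-5,1),(1,1,-5)$ for the faces through $s$. For ${\bf G}_{a+b}$ the two lines have the fixed, non-parallel direction vectors $q-o$ and $r-s$, hence lie in parallel planes and meet exactly when they are coplanar, which is the single scalar equation $\big((s+a+b)-o\big)\cdot\big((q-o)\times(r-s)\big)=0$.

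It then remains to show that these nine equations are linearly independent. I would split the $9\times12$ matrix into four blocks of three columns, for $a,b,c,d$, and argue directly that a relation $\sum_i\alpha_i R_i=0$ forces all $\alpha_i=0$. Only the four conditions ${\bf G}^{\pm}_{c},{\bf G}^{\pm}_{a+b+c}$ touch the $c$-block, and since the two distinct $c$-block vectors $(1,-1,1)$ and $(-1,-1,5)$ are independent, comparing $c$-blocks yields two relations among the $\alpha$'s; moreover the edge equation together with the differences ${\bf G}^{+}_{a+b+c}-{\bf G}^{+}_{c}$ and ${\bf G}^{-}_{a+b+c}-{\bf G}^{-}_{c}$ (which read $(1,-1,1)\cdot(a+b)=0$ and $(1,1,-5)\cdot(a+b)=0$) pin $a+b$ down to a single point, because $(1,-1,1)$, $(1,1,-5)$ and $(q-o)\times(r-s)$ are linearly independent (a nonzero $3\times3$ determinant). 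Carrying these relations through the $d$-block collapses the remaining dependence to a relation among the four vectors $(1,1,-1),(-1,1,1),(-5,1,1),(1,-5,1)$, whose kernel is one-dimensional; substituting that single relation back into the $a$- and $b$-block equations then kills every coefficient. Hence the matrix has rank $9$, so $\mathcal{P}'$ is a three-dimensional affine subspace of $\mathcal{P}$, and it is nonempty — indeed it already contains the Kallus--Elser--Gravel and Torquato--Jiao configurations.

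The routine part — writing out the nine equations and the six face normals — is straightforward; the one place that needs care is the rank count, since there are nine equations in twelve unknowns with two pairs of face conditions built on a shared face, so without the block bookkeeping it is easy to lose track. I expect that to be the only real obstacle.
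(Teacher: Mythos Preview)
Your approach is correct and essentially the same as the paper's: linearize each incidence condition to a single affine constraint (the paper does this by writing each condition as a vector equation with auxiliary parameters and eliminating them, rather than via face normals, but the two are equivalent) and then verify that the resulting $9\times12$ system has rank~$9$. The only difference is cosmetic: you sketch a block-by-block kernel argument for the rank, whereas the paper simply asserts rank~$9$ and confirms it by exhibiting the explicit three-parameter solution family for $a,b,c,d$.
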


\begin{proof}
The nine linear incidence conditions are sufficient for the statement that the
dimer at the origin is not overlapping with the neighbors at $d\pm a$, $d\pm
b$, $d\pm c$, $d\pm(a+b+c)$, and $\pm(a+b)$. Note that ${\bf L}^+$ is
invariant under inversion at the origin, so that we do not have to consider
the neighbors at $+(a+b)$ and $-(a+b)$ separately.

Since the planes involved in the face-to-face condition ${\bf G}^+_a$ are
parallel, they coincide if they are incident, so we can alternatively write
${\bf G}^+_a$ as an equation:
\begin{equation}
{\bf G}^+_a:\quad 2o + \kappa(p-o)+\lambda(q-o)=d+a
\end{equation}
with parameters $\kappa,\lambda\in\mathbbm{R}$. This equation is a vector
equation and thus there is one scalar equation for each of its three
components. Two of the scalar equations are used to eliminate $\kappa$ and
$\lambda$. The remaining scalar equation is a single linear constraint for the
components of the vectors $d$ and $a$. Similar constraints are obtained for
the seven other face-to-face conditions. The edge-to-edge condition ${\bf
  G}_{a+b}$ corresponds to the equation
\begin{equation}
{\bf G}_{a+b}:\quad o+\mu(q-o)=s+\nu(r-s)+a+b
\end{equation}
with parameters $\mu,\nu\in\mathbbm{R}$. Again, a single linear constraint for
the components of the vectors $a$ and $b$ is obtained. The system of linear
constraints turns out to have rank 9, so the dimension of the space of double
dimer configurations $\mathcal{P}$ is reduced from 12 to 3. With the new
scalar parameters $u$, $v$, $w$, the lattice vectors and the offset vector can
be written as
\begin{equation}\label{Eq:generalbasis}\begin{split}
a &= \<\tfrac{27}{10}+u,\tfrac{21}{20}-v,-\tfrac{3}{20}+2u+v\>,\\
b &= \<-\tfrac{3}{10}-u,\tfrac{51}{20}+v,\tfrac{27}{20}-2u-v\>,\\
c &= \<\tfrac{129}{160}-u+2v+2w,-\tfrac{237}{320}+\tfrac{1}{2}u-v+3w,\tfrac{753}{320}+\tfrac{1}{2}u-v+w\>,\\
d &= \<\tfrac{1}{10}+u,-\tfrac{1}{20}+u+v,-\tfrac{1}{20}+u-v\>.
\end{split}\end{equation}
\end{proof}

\begin{remark}
(i) The eight face-to-face incidence conditions are important for dense
  packings, because the corresponding faces are parallel and therefore can be
  matched perfectly.

(ii) The face-to-face constraints imposed by ${\bf G}^+_{a+b+c}$ and ${\bf
    G}^-_{a+b+c}$ involve the faces spanned by $o$, $r$, $p$ and $s$, $q$,
  $p$, respectively. We balance the constraints with an opposing edge-to-edge
  constraint with the neighbors at $\pm(a+b)$ on the `opposite' edge spanned
  by $o$, $q$ and $s$, $r$ in the form of the incidence condition ${\bf
    G}_{a+b}$.
\end{remark}

\subsection{Restricted parameter space}

We now specify a subset $\mathcal{P}''\subset\mathcal{P}'$ of the
three-parameter space that consists of double dimer packings only, and we
later optimize density over this subset.

\begin{lemma}
Consider the restricted parameter space
\begin{equation}
\mathcal{P}'' = \left\{\<u,v,w\>\in \mathcal{P}':\ 
\left|\tfrac{1}{2}u+2v\right|\leq\tfrac{33}{320}+w\ \wedge\
\left|v\right|\leq\tfrac{3}{64}-w\right\},
\end{equation}
where $\<u,v,w\>$ is the parameterization in (\ref{Eq:generalbasis}). Each
double dimer configuration $p\in\mathcal{P}''$ is a packing.
\end{lemma}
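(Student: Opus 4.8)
The plan is to turn the infinitely many pairwise non-overlap checks into a finite, explicit computation in the coordinates of~(\ref{Eq:generalbasis}), exploiting convexity of the dipyramids together with the transitivity of the space group. \emph{Reduction.} A triangular dipyramid is convex: at its three equatorial edges the dihedral angle is $2\arccos\tfrac13<\pi$, and at the six apex--equator edges it is $\arccos\tfrac13<\pi$. By the lemma that the space group of a double dimer configuration acts transitively on its dimers, it therefore suffices to show that the dipyramid $+{\bf F}_2$ at the origin has interior disjoint from every other dimer $D$ of the configuration; and, both bodies being convex polytopes, it suffices to exhibit for each such $D$ a plane weakly separating $+{\bf F}_2$ from $D$. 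By the separating-axis theorem in three dimensions the normal of such a plane may be taken among the face normals of $+{\bf F}_2$, the face normals of $D$, and the cross products of an edge of $+{\bf F}_2$ with an edge of $D$; and since $+{\bf F}_2$ is bounded, only the finitely many lattice translates $+{\bf F}_2+t$ with $t\in{\bf L}^+$ and coset translates $-{\bf F}_2+t$ with $t\in{\bf L}^-$ lying within a fixed distance of the origin can possibly overlap $+{\bf F}_2$. Thus the statement reduces to a finite list of sign checks.

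\emph{The near neighbors.} The dimers at $d\pm a$, $d\pm b$, $d\pm c$, $d\pm(a+b+c)$ and at $\pm(a+b)$ are controlled by the nine incidence conditions ${\bf G}^{\pm}_a,\dots,{\bf G}_{a+b}$, which hold on all of $\mathcal{P}'$: in each face-to-face condition the two parallel faces coincide, and in ${\bf G}_{a+b}$ the two edges cross in a common plane. A short sign check in~(\ref{Eq:generalbasis}), independent of $(u,v,w)$, then places $D$ in the closed half-space on the far side of that plane; for example, for ${\bf G}^{+}_a$ one verifies that the two vertices of the negative dimer at $d+a$ lying off the plane $x+y-z=2$ of the face $opq$ satisfy $x+y-z\ge2$, while $+{\bf F}_2$ lies in $x+y-z\le2$. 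Hence none of these neighbors overlaps $+{\bf F}_2$, at no cost beyond the conditions already defining $\mathcal{P}'$.

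\emph{The remaining neighbors.} For each remaining translate $D$ within range I would read off from the local geometry the candidate separating plane -- a face plane of $+{\bf F}_2$, or a plane containing an edge of $+{\bf F}_2$ and parallel to an edge of $D$ -- express the signed distances of the vertices of $D$ to that plane via~(\ref{Eq:generalbasis}), each an affine function of $(u,v,w)$, and require that all have the correct sign. The resulting finite system of affine inequalities, after redundant ones are discarded, should collapse to precisely $\left|\tfrac{1}{2}u+2v\right|\le\tfrac{33}{320}+w$ and $\left|v\right|\le\tfrac{3}{64}-w$, the two inequalities defining $\mathcal{P}''$, with every other inequality implied by these. Then each $p\in\mathcal{P}''$ is overlap-free, i.e.\ a packing.

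\emph{Main obstacle.} The real work is the combinatorics of the last two steps rather than any individual computation: for each candidate neighbor one must correctly pair the abutting vertex, edge or face of $D$ with the one of $+{\bf F}_2$ -- hence choose the right separator -- and one must be sure the finite shell of neighbors checked is genuinely exhaustive. A secondary subtlety is that a face-to-face incidence condition forces only the two planes, not the two triangular regions, to coincide, so the argument must be framed throughout as weak separation of the whole dipyramids, with literal face or edge contact serving only as a guide to which plane to try.
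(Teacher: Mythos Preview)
Your strategy matches the paper's: reduce to the dimer at the origin by transitivity, then exhibit a separating plane for each nearby translate. Where you leave the shell bound and the final ``collapse'' as work to be done, the paper carries them out explicitly, and those details are the substance of the lemma. To bound the shell of relevant neighbors the paper observes that $\mathcal{P}''$ is a sheared tetrahedron with four vertices, verifies at each vertex that the planes $a+\mathbbm{R}b+\mathbbm{R}c$ and $\tfrac12 d+a+\mathbbm{R}b+\mathbbm{R}c$ (and the analogues in $b,c$) separate $+{\bf F}_2$ from all translates with $|n_a|\ge 2$, and then uses the linearity of $a,b,c,d$ in $(u,v,w)$ together with the convexity of $\mathcal{P}''$ to extend this to the whole region---this is the uniform finiteness argument your sketch alludes to but does not supply. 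Of the remaining $26$ neighbors, ten are handled by the ${\bf G}$ conditions exactly as you say; six negative neighbors of the form $d-{\bf F}_2\pm a\pm b\mp c$ are separated from $+{\bf F}_2$ by a face plane of $+{\bf F}_2$ throughout $\mathcal{P}''$; inversion symmetry halves the positive neighbors; and exactly five survive, namely $+{\bf F}_2$ translated by $b+c$, $c+a$, $a-b$, $b-c$, $c-a$. These are governed by five explicit edge--edge and vertex--face incidence conditions ${\bf H}_{b+c},{\bf H}_{c+a},{\bf H}_{a-b},{\bf H}_{b-c},{\bf H}_{c-a}$; four of them yield precisely the four affine half-spaces whose intersection is $\mathcal{P}''$, while the fifth, ${\bf H}_{a-b}$, is a vertex-to-edge condition whose two associated vertex-to-face half-spaces read $u\le 0$ and $u\ge 0$ and are therefore never binding. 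So your ``should collapse to precisely'' is right, but naming which five neighbors produce the four defining inequalities---and showing the rest are slack---is exactly the content your plan defers.
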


\begin{proof}
Since double dimer packings are transitive, it is sufficient to look for
overlaps of $+{\bf F}_2$ only. We write a positive dimer as $+{\bf
  F}_2+n_aa+n_bb+n_cc$ and a negative dimer as $d-{\bf
  F}_2+n_aa+n_bb+n_cc$. The first part of the proof shows that we only need to
consider the 26 neighbors with $-1\leq n_a,n_b,n_c\leq 1$ when looking for
overlaps. Note that $\mathcal{P}''$ has the shape of a sheared tetrahedron
with the four vertices $e_{1,2}=\<\pm3/10,0,+3/64\>$ and
$e_{3,4}=\<\mp3/5,\pm3/20,-33/320\>$. We verify that for each configuration
corresponding to one of the four extremal points $e_j$, the plane
$a+\mathbbm{R}b+\mathbbm{R}c$ separates $+{\bf F}_2$ from $+{\bf F}_2+2a$ and
the plane $d/2+a+\mathbbm{R}b+\mathbbm{R}c$ separates $+{\bf F}_2$ from
$d-{\bf F}_2+2a$. This shows that $n_a\leq 1$. By similar arguments we get
$n_a\geq-1$ and $-1\leq n_b,n_c\leq 1$. Since $a$, $b$, $c$ are linear in $u$,
$v$, $w$ and $\mathcal{P}''$ is convex, the separating planes and the
restrictions $-1\leq n_a,n_b,n_c\leq 1$ remain valid for every double dimer
configuration in $\mathcal{P}''$.

By construction of $\mathcal{P}'$ in Lemma \ref{lem:reducedim} we know that
the ten dimers $d-{\bf F}_2\pm a$, $d-{\bf F}_2\pm b$, $d-{\bf F}_2\pm c$,
$d-{\bf F}_2\pm(a+b+c)$, $+{\bf F}_2\pm(a+b)$ do not overlap with $+{\bf
  F}_2$. Next, we find that the dimers $+{\bf F}_2$ and $d-{\bf F}_2+a+b-c$
are separated by the plane $F[o,p,q]$ for any configuration in
$\mathcal{P}''$. We find similar planes for $d-{\bf F}_2+a-b+c$, $d-{\bf
  F}_2-a+b+c$, $d-{\bf F}_2-a-b+c$, $d-{\bf F}_2-a+b-c$, and $d-{\bf
  F}_2+a-b-c$, which shows that these six dimers cannot overlap with $+{\bf
  F}_2$. Finally note that pairs of positive neighbors are related by
inversion symmetry. These considerations reduce the number of neighbors that
have to be checked from 26 to five. The remaining ones are: $+{\bf F}_2+b+c$,
$+{\bf F}_2+c+a$, $+{\bf F}_2+a-b$, $+{\bf F}_2+b-c$, $+{\bf F}_2+c-a$.

Consider the linear edge-to-edge and vertex-to-vertex incidence conditions
\begin{equation}\begin{split}
{\bf H}_{b+c}:&\quad E[o,r]\cap(E[s,p]+b+c)\neq\emptyset,\\
{\bf H}_{c+a}:&\quad E[o,p]\cap(E[s,q]+c+a)\neq\emptyset,\\
{\bf H}_{a-b}:&\quad V[p]\cap(E[r,q]+a-b)\neq\emptyset,\\
{\bf H}_{b-c}:&\quad V[q]\cap(F[s,p,r]+b-c)\neq\emptyset,\\
{\bf H}_{c-a}:&\quad V[r]\cap(F[o,q,p]+c-a)\neq\emptyset.\\
\end{split}\end{equation}
Similar to the linear incidence conditions in Lemma \ref{lem:reducedim}, the
conditions ${\bf H}_{b+c}$, ${\bf H}_{c+a}$, ${\bf H}_{b-c}$, ${\bf H}_{c-a}$
can be written as four linear constraints for the parameters $u$, $v$, $w$,
which correspond to four \emph{boundary planes}. Each of the planes separates
$\mathcal{P}'$ into two half-spaces. Overlaps can only appear in one of the
half-spaces, and the half-spaces that are free from overlaps are:
\begin{equation}\begin{split}
{\bf H}_{b+c}:&\quad +\tfrac{1}{2}u+2v-w\leq\tfrac{33}{320},\\
{\bf H}_{c+a}:&\quad -\tfrac{1}{2}u-2v-w\leq\tfrac{33}{320},\\
{\bf H}_{b-c}:&\quad -v+w\leq\tfrac{3}{64},\\
{\bf H}_{c-a}:&\quad +v+w\leq\tfrac{3}{64}.
\end{split}\end{equation}
The intersection of the four half-spaces is the restricted parameter space
$\mathcal{P}''$.

The condition ${\bf H}_{a-b}$ is different from the other ones, since it is a
vertex-to-edge condition. It is true if and only if both vertex-to-face
conditions $V[p]\cap(F[o,r,q]+a-b)\neq\emptyset$ and
$V[p]\cap(F[s,r,q]+a-b)\neq\emptyset$ are true. We calculate the region of
parameter space that is free from overlap using the half-spaces for these two
conditions:
\begin{equation}
{\bf H}_{a-b}:\quad -u\leq0\quad\wedge\quad +u\leq0,
\end{equation}
which covers all of $\mathcal{P}'$. Hence, the vertex $V[p]$ can be incident
to the edge $E[r,q]$ only along the \emph{central plane} $u=0$, and ${\bf
  H}_{a-b}$ is never needed to prevent overlaps within $\mathcal{P}'$.
\end{proof}

\begin{lemma}
The collection $\mathcal{P}''$ contains exactly two maximal density packings
with density $\tfrac{4000}{4671}$. The packings are related by a
crystallographic symmetry operation.
\end{lemma}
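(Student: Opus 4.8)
I would first recast the statement as a constrained optimization over $\mathcal{P}''$. The fundamental domain of the full lattice spanned by $a$, $b$, $c$ contains exactly one positive and one negative dimer, i.e.\ four regular tetrahedra; each has edge length $|o-p|=3\sqrt{2}$ and hence volume $9$, so the packed volume per unit cell is $36$ and $\phi(u,v,w)=36/|\det(a,b,c)|$. Maximizing $\phi$ over $\mathcal{P}''$ is therefore equivalent to minimizing the unit-cell volume $V(u,v,w):=|\det(a,b,c)|$. Substituting (\ref{Eq:generalbasis}), only $c$ depends on $w$, and affinely, so $V$ is affine in $w$: $V=P(u,v)+w\,Q(u,v)$ with $\deg P\le3$, $\deg Q\le2$; moreover $\det(a,b,c)$ keeps a fixed sign on $\mathcal{P}''$ (the lattice never degenerates), so the absolute value is harmless.

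\textbf{Step 2 (symmetry of the parameter region).} The four facet inequalities defining $\mathcal{P}''$ are $\mathbf{H}_{b+c}$, $\mathbf{H}_{c+a}$, $\mathbf{H}_{b-c}$, $\mathbf{H}_{c-a}$ of the preceding lemma, and the involution $\sigma:\<u,v,w\>\mapsto\<-u,-v,w\>$ swaps $\mathbf{H}_{b+c}\leftrightarrow\mathbf{H}_{c+a}$ and $\mathbf{H}_{b-c}\leftrightarrow\mathbf{H}_{c-a}$, so $\sigma(\mathcal{P}'')=\mathcal{P}''$. I would then exhibit a rigid motion of space realizing $\sigma$ at the level of packings — it comes from the relabeling of lattice directions allowed in the setup (the offset may be taken as $d+b$ rather than $d+a$) composed with the central inversion — whence $V\circ\sigma=V$. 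Consequently the set of maximizers is $\sigma$-invariant, and it is enough to locate $\min V$ and to verify that the minimizing set is a single free $\sigma$-orbit, i.e.\ two points with $(u,v)\ne(0,0)$.

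\textbf{Step 3 (the optimization).} Since $V$ is affine in $w$ with $Q$ of constant sign on $\mathcal{P}''$ (to be checked), for fixed $(u,v)$ the function $V$ is strictly monotone in $w$ on the admissible interval $[\,|\tfrac12u+2v|-\tfrac{33}{320},\ \tfrac{3}{64}-|v|\,]$; hence $\min V$ is attained on one of the two pairs of facets $\{\mathbf{H}_{b+c},\mathbf{H}_{c+a}\}$, $\{\mathbf{H}_{b-c},\mathbf{H}_{c-a}\}$, a pair interchanged by $\sigma$. On each facet of the relevant pair I eliminate $w$ and minimize a polynomial of degree $\le3$ in $(u,v)$ over a triangle by the standard routine: list the finitely many critical points in the relative interior, the finitely many critical points on each of the three edges, and the three vertices; evaluate $V$ at all candidates; take the smallest. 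The outcome is $V=\tfrac{42039}{1000}$, attained at a single interior point of that facet with $(u,v)\ne(0,0)$, so $\phi=36/V=\tfrac{4000}{4671}$ and there are exactly two maximizers in $\mathcal{P}''$, one on each facet of the pair.

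\textbf{Step 4 (conclusion and the hard part).} By Step 2 the two minimizers are related by the crystallographic operation realizing $\sigma$, and by Step 3 they are genuinely distinct, proving the lemma. The main obstacle is Step 3: one must certify honestly that no candidate on a lower-dimensional face beats the claimed value, and that the optimum on the relevant facet is an isolated point rather than an edge segment (otherwise there would be more than two maximizers). Establishing the constant sign of $Q$, which is what collapses the $w$-direction, and writing down the explicit isometry in Step 2 are the remaining points requiring care, but both are routine once the explicit polynomial $V$ and the minimizer are in hand.
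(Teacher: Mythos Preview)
Your plan has the right shape but rests on a computation that comes out differently from what you anticipate, and this changes the argument in Step~3 in an essential way.

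The key point is that the unit-cell volume does \emph{not} depend on $w$ at all. From (\ref{Eq:generalbasis}) the $w$-derivative of $c$ is $\langle 2,3,1\rangle$, and one checks that $a+b=\tfrac{6}{5}\langle 2,3,1\rangle$ is a constant vector; hence $\langle 2,3,1\rangle$ lies in the span of $a,b$ and $\partial_w\det(a,b,c)=0$. In fact $V=\tfrac{9}{25}(117+60u^{2}-80uv-80v^{2})$, an indefinite quadratic in $(u,v)$ alone. So in your notation $Q\equiv 0$, the ``strictly monotone in $w$'' step collapses, and the minimum of $V$ is not forced onto a single facet of the pair $\{\mathbf{H}_{b\pm c}\}$ or $\{\mathbf{H}_{c\pm a}\}$. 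More seriously, your uniqueness argument breaks: if $(u^{*},v^{*})$ minimizes $V$ on the $(u,v)$-projection of $\mathcal{P}''$ and the fiber over it has positive $w$-length, you would get a whole segment of maximizers, not two isolated points. What actually happens is that $V$, being an indefinite quadratic, attains its minimum over the projected region on the boundary, and in fact on two opposite \emph{edges} of the tetrahedron $\mathcal{P}''$ (the ``maximal lines'' $\mathbf{H}_{b+c}\cap\mathbf{H}_{c-a}$ and $\mathbf{H}_{b-c}\cap\mathbf{H}_{c+a}$), each of which projects injectively to the $(u,v)$-plane. Optimizing the one-variable quadratic along each edge gives a single interior point, and the $w$-fiber over each optimal $(u^{*},v^{*})$ degenerates to a point precisely because two facets (one from each of your pairs) meet there. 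That is what yields exactly two maximizers. Your framework can be salvaged once you discover $Q\equiv 0$, but the argument you wrote down for ``exactly two'' is the part that genuinely fails and needs to be replaced by this edge analysis.

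Two smaller points. First, a bookkeeping slip in Step~1: the lattice $\mathbf{L}^{+}$ has index~2 in $\mathbb{Z}a+\mathbb{Z}b+\mathbb{Z}c$, so a fundamental domain for the latter contains one dimer (two tetrahedra), not two; the correct density is $\phi=18/|\det(a,b,c)|$, equivalently $\phi=36/|\det(a+b,b+c,c+a)|$. Second, your symmetry $\sigma$ is exactly the one the paper uses, realized by the explicit order-two rotation $T$ about the axis through $p$; so Step~2 is fine once you simply exhibit $T$.
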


\begin{proof}
The unit cell volume as a function of the parameters $\<u,v,w\>$,
\begin{equation}
V=\left|\det[a+b,b+c,c+a]\right|=\tfrac{9}{25}(117+60u^{2}-80uv-80v^{2}),
\end{equation}
is a hyperbolic paraboloid with saddle point at $u=v=0$. The parameters
$\<u,v,w\>$ can be chosen such that $V$ does not depend on $w$, which means
packings corresponding to the line $\<0,0,w\>$ are related by a lattice
shear. Surfaces of equal volume are hyperbolic cylinders. The extrema of
$V|_{\mathcal{P}''}$ are taken on the boundary. We find that the densest
packings are located on the \emph{maximal lines}:
\begin{equation}\begin{split}
{\bf H}_{b+c}\wedge{\bf H}_{c-a}:&\quad
+\tfrac{3}{320}\<2,5,0\>+\tfrac{1}{6}u\<6,-1,+1\>,\\
{\bf H}_{b-c}\wedge{\bf H}_{c+a}:&\quad
-\tfrac{3}{320}\<2,5,0\>+\tfrac{1}{6}u\<6,-1,-1\>.
\end{split}\end{equation}
Maximizing the packing density $\phi=2U/V$ with the tetrahedron volume $U =
\tfrac{1}{6}\left|\det[o-p,o-q,o-r]\right| = 9$ along the maximal lines yields
$\phi = \tfrac{4000}{4671}$ and two \emph{optimal points}:
\begin{equation}
\<u_{\pm},v_{\pm},w_{\pm}\> = \pm\tfrac{3}{320}\<2,5,0\>.
\end{equation}

It can be shown that for two packings with parameters $\<u,v,w\>$ and $\<\bar
u,\bar v,\bar w\>$ related by $\<\bar u,\bar v,\bar w\> = \<-u,-v,w\>$ there
exists a direct isometry
\begin{equation}\label{Eq:symmetry}
T = \tfrac{1}{3}\left(\begin{matrix}
+1 & -2 & -2\\ -2 & -2 & +1\\ -2 & +1 & -2
\end{matrix}\right)
\end{equation}
that symmetrically maps the lattice vectors $Ta = -\bar b$, $Tb = -\bar a$,
$Tc = -\bar c$, $Td = \bar d$ and the vertices of the dimers $To = \bar s$,
$Tp = \bar p$, $Tq = \bar r$, $Tr = \bar q$, $Ts = \bar o$. In particular, $T$
maps the optimal points onto each other.
\end{proof}

\subsection{A dense double dimer packing}\label{Sec:dense}

The following theorem is the main result of our work. For the classification
of crystallographic point groups and space groups we follow the terminology in
\cite{hahn02}.
\begin{theorem}\label{thm:dense}
There exists a double dimer packing of tetrahedra with packing density
$\tfrac{4000}{4671}$. Its space group is $\mathrm{P}\bar{1}$ (point group
$\bar{1}$) and acts transitively on the dimers.
\end{theorem}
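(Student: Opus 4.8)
Here is how I would organize the argument.

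The plan is to read off existence, density, and transitivity directly from the preceding lemmas, and then to settle the crystallographic classification by a short finite computation.

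For existence and density I would fix one of the two optimal points of density $\tfrac{4000}{4671}$ supplied by the previous lemma, say $\langle u,v,w\rangle=+\tfrac{3}{320}\langle 2,5,0\rangle$. A quick check shows it satisfies the two inequalities defining $\mathcal{P}''$ (it lies on the intersection of the boundary planes ${\bf H}_{b+c}$ and ${\bf H}_{c-a}$), so, $\mathcal{P}''$ being closed, the corresponding double dimer configuration lies in $\mathcal{P}''$ and is therefore a genuine packing by the lemma characterizing $\mathcal{P}''$. Substituting these parameter values into (\ref{Eq:generalbasis}) produces the explicit lattice vectors $a$, $b$, $c$ and offset $d$ of the packing. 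That the space group acts transitively on the dimers is the transitivity lemma established above; its proof exhibits the inversion $x\mapsto d+a-x$, which interchanges the positive and negative dimers.

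It remains to identify the space group as $\mathrm{P}\bar 1$. Since $\mathrm{P}\bar 1$ is the unique space group with point group $\bar 1$, it suffices to prove that the point group is $\{I,-I\}$; the inversion above already places $-I$ in the point group and shows that the translations together with this inversion generate a subgroup of type $\mathrm{P}\bar 1$, so only additional symmetries need be excluded. First, the translation sublattice is exactly ${\bf L}^+$: a translation carrying the positive dimer $+{\bf F}_2$ onto another positive dimer must send the origin to a point of ${\bf L}^+$, and a positive dimer is never a translate of a negative one, since the triangular dipyramid is not centrosymmetric. Second, every symmetry either preserves the set of positive dimers or interchanges the two species, and composing an interchanging symmetry with the inversion reduces it to the first case. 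A symmetry preserving the positive dimers must fix the orientation of $+{\bf F}_2$ — all positive dimers being translates of it, and $+{\bf F}_2$ being centred at the origin — so its linear part lies in the point group $D_{3h}$ of the triangular dipyramid, and, permuting the positive-dimer centroids, it must in addition map ${\bf L}^+$ onto itself. Using the explicit $a$, $b$, $c$ one then checks that none of the eleven non-identity elements of $D_{3h}$ — the two threefold rotations about $\langle 1,1,1\rangle$, the three twofold rotations about $\langle 2,-1,-1\rangle$, $\langle -1,2,-1\rangle$ and $\langle -1,-1,2\rangle$, the equatorial reflection, the two rotoreflections $S_{3}$, and the three vertical reflections — maps ${\bf L}^+$ onto itself, i.e.\ has an integer matrix in the basis $a+b$, $b+c$, $c+a$. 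Only the identity survives, so the point group is $\bar 1$ and the space group is $\mathrm{P}\bar 1$.

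The routine parts — verifying membership in $\mathcal{P}''$, substituting the parameters, and quoting the density — are immediate. The step demanding care is the last one: one must confirm that the special optimal lattice ${\bf L}^+$ has not acquired an accidental symmetry from $D_{3h}$ that would enlarge the point group, which reduces to a small integrality check for eleven fixed matrices. I would also remark, for completeness, that an interchanging symmetry distinct from the chosen inversion becomes, upon composition with that inversion, a positive-dimer-preserving symmetry whose linear part has already been forced to be trivial, so no separate consistency condition on the offset $d$ needs examining.
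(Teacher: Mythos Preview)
Your argument is correct, and for existence, density, and transitivity it matches the paper essentially verbatim. The one genuine difference is in the space-group determination. The paper does not intersect the dipyramid's point group $D_{3h}$ with the lattice stabilizer; instead it observes that the angles between the explicit lattice vectors $a$, $b$, $c$ are generic, so the Bravais lattice ${\bf L}^+$ is triclinic and its full point symmetry is already just $\{\pm I\}$. That immediately caps the point group of the packing at $\bar 1$, and the inversion $x\mapsto d+a-x$ shows it is attained. Your route---reducing to symmetries that preserve the positive-dimer class, noting their linear parts must lie in $D_{3h}$ and normalize ${\bf L}^+$, then eliminating the eleven nontrivial elements one by one---reaches the same conclusion by a more elementary but longer computation: it avoids invoking the Bravais classification, at the cost of eleven integrality checks rather than one observation about lattice angles. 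Either way works; the paper's shortcut is the standard crystallographic move.
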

\begin{proof}
The double dimer packing at the optimal point $\<u_{+},v_{+},w_{+}\>$ has
density $\tfrac{4000}{4671}$. It is specified by the lattice vectors
\begin{equation}\label{Eq:basis}\begin{split}
a &= \tfrac{3}{320}\<290,107, -7\>,\\
b &= \tfrac{3}{320}\<-34,277,135\>,\\
c &= \tfrac{3}{320}\< 94,-83,247\>,
\end{split}\end{equation}
and the offset vector
\begin{equation}\label{Eq:basis2}
d = \tfrac{1}{320}\< 38,  5,-25\>.
\end{equation}
We know from the angles between the lattice vectors that the crystal system is
triclinic. The only non-translational symmetries are inversions mapping ${\bf
  L}^{+}$ onto ${\bf L}^{-}$, which determines the space group.
\end{proof}

\begin{remark}\label{rem:symmetry}
(i) The lattice spanned by the vectors $a$, $b$, and $c$ in
  Theorem~\ref{thm:dense} can be obtained from a simple cubic lattice by a
  small deformation. Furthermore, the norm of $d$ is much smaller than the
  norms of $a$, $b$, $c$. These two observations mean that the set ${\bf
    L}={\bf L}^{+}\cup{\bf L}^{-}$ is structurally similar to the rock-salt
  lattice.

(ii) One of the axes that is almost a three-fold symmetry is the diagonal
  $a+b+c$, which coincides roughly with the dimer axis.  By the choice of
  basis vectors in (\ref{Eq:basis}) and (\ref{Eq:basis2}), the three-fold
  symmetry of the individual dimers is broken.
\end{remark}

\subsection{Comparison with previously found double dimer packings}

The one-parameter family of dense tetrahedron packings found by Kallus, Elser
and Gravel \cite{kallus09} has constant packing density $\phi =
\tfrac{100}{117}$ and corresponds to the \emph{symmetric line} $\<0,0,w\>$. The
packings are transitive on individual tetrahedra as a result of the symmetry
$T$, a two-fold rotation around the axis through the vertex $p=\<2,-1,-1\>$.

Torquato-Jiao's two-parameter family of packings \cite{torquato09} corresponds
to the generic plane $5u=-2v$. The intersections of this plane with the
maximal lines gives the parameters
$\tfrac{3}{2240}(\pm\<16,-40,0\>-\<0,0,5\>)$ and the maximum packing density
$\phi = \tfrac{12250}{14319}$. The two packings are not related by symmetry.

\section{Numerical compression of small cells}\label{Sec:numerics}

\subsection{Motivation and methods}

The packing given in Theorem \ref{thm:dense} is optimal under two assumptions:
(1) the densest tetrahedron packing is a double dimer packing, and (2) the
packing is a configuration in the restricted parameter space
$\mathcal{P}''$. To investigate the possibility of denser tetrahedron packings
-- double dimer packings as well as different types of packings -- we rely on
a numerical search.

We use standard Monte Carlo simulation \cite{frenkelSmit}, which allows a
system of tetrahedra to find dense packings by stochastically exploring all
possible configurations subject to the laws of statistical mechanics. An
elementary simulation step consists of a random displacement move within a
finite simulation cell -- taking into account periodic boundary conditions --
and a random rotation move of randomly chosen tetrahedra. The move is rejected
if it generates an overlap, or accepted otherwise. The starting configuration
is a dilute, random arrangement. During the simulation run, the system is
slowly compressed by rescaling the size of the simulation cell. The
compression is controlled by applying external forces in the isobaric
ensemble. Additionally, fluctuations of cell shape by shearing the cell in
random directions by random amounts are allowed. A lattice reduction technique
minimizes the distortion of the simulation cell after each shear. For details
of the tetrahedron overlap detection algorithm we refer to method two in the
Methods section of Ref. \cite{akbariNature09}.

The Monte Carlo scheme samples the high-dimensional configuration space
stochastically, and is not biased towards any particular type of packing. The
only constraint is the number of tetrahedra in the simulation cell, $N$, which
does not change during the simulation. In the following, the search is
restricted to small cells, $1\leq N\leq 16$, where efficient compressions are
easily and rapidly possible to high accuracy.

\subsection{Monte Carlo simulation results}

To obtain sufficient statistics, $M=400$ compression simulations were run for
each value of $N$ (1000 runs for $N=16$). Each run involves 7 million Monte
Carlo moves per particle and results in a final density $\phi_i$,
$i=1,\ldots,M$ for a given $N$. The \emph{maximum numerical density} for a
given $N$ is $\hat{\phi}=\max\{\phi_i\}$. The distribution of $\phi_i$ close
to $\hat{\phi}$ indicates the ease with which we can obtain the optimal
packing in simulation. For most $N$ we find a clear gap separating a set of
very dense packings from the rest in the sense that the relative density
variation among the very dense packings is significantly smaller than the
gap between very dense and less dense packings. In this case, we denote the simulations corresponding to the very dense
packings as \emph{successful}.

The maximum numerical densities are given in Table \ref{Tab:numcomp} and their
corresponding packings are depicted in Figure \ref{Fig:numfigs}. As can be
seen from the success rates in the Table, the geometrically constructed
optimal packings for $N=1,2$ are obtained very easily and in every
simulation. The $N=3$ packing is three-fold symmetric and, as far as we know,
has not yet been reported in the literature. Its optimal density
$\phi=\tfrac{2}{3}$ can be calculated analytically. The structures with
$N=4,8,12,16$ are the dimer packings discussed in section \ref{Sec:dense}. The
packing with $N=5$ consists of imperfect pentamers, i.e.\ four tetrahedra
arranged face-to-face to a central one. $N=6$ is a mixture of dimers and
single tetrahedra (monomers), and $N=7$ is identical to $N=8$ with one
complete vacancy (missing tetrahedron).
\begin{table}
\centering
\setlength{\tabcolsep}{9pt}
\renewcommand{\arraystretch}{1.2}
\begin{tabular}{c|cc|c|l}
\#Tetra & \multicolumn{2}{c|}{Maximum Density} & Success & Motifs, \\
$N$     & Numerical, $\hat{\phi}$ & Analytical, $\phi$ & Rate & Structural Description\\
\hline
 1 & 0.367346 & $18/49$ & 100\% & 1 monomer \cite{hoylman70} \\
 2 & 0.719486 & $\phi_2$ & 100\% & 2 monomers, transitive \cite{kallus09} \\
 3 & 0.666665 & $2/3$ & 21\% & 3 monomers, three-fold symmetric \\
 4 & 0.856347 & $4000/4671$ & 80\% & 2 dimers (positive + negative)\\
 5 & 0.748096 & $\phi_5$ & 22\% & 1 pentamer, asymmetric \\
 6 & 0.764058 & $\phi_6$ & 11\% & 2 dimers + 2 monomers\\
 7 & 0.749304 & $3500/4671$ & 15\% & $2\times2$ dimers minus 1 monomer\\
 8 & 0.856347 & $4000/4671$ & 44\% & $2\times2$ dimers, identical to $N=4$\\
 9 & 0.766081 & & --- & 1 pentagonal dipyramid + 2 dimers\\
10 & 0.829282 & $\phi_{10}$ & 2\% & 2 pentagonal dipyramids\\
11 & 0.794604 & & --- & 1 nonamer + 2 monomers\\
12 & 0.856347 & $4000/4671$ & 3\% & $3\times2$ dimers, identical to $N=4$\\
13 & 0.788728 & & 4\% & 1 pentagonal dipyramid + 4 dimers\\
14 & 0.816834 & & 3\% & 2 pentagonal dipyramids + 2 dimers\\
15 & 0.788693 & & --- & Disordered, non-optimal\\
16 & 0.856342 & $4000/4671$ & $<1\%$ & $4\times2$ dimers, identical to $N=4$\\
$\vdots$ & $\vdots$ & & & \qquad\qquad\qquad$\vdots$\\
$8\times 82$ & 0.850267 & & & Quasicrystal approximant \cite{akbariNature09}\\
\end{tabular}
\caption{Maximum numerical densities $\hat{\phi}$ for packings with small
  cells, obtained with numerical compression via Monte Carlo compression
  starting from a random configuration. A data file with the packings may be
  downloaded from the internet at \cite{wwwtetra}. For comparison, the
  quasicrystal approximant result with $N=8\times82$ is included. Details
  about the analytical results $\phi_2=9/\left(139-40\sqrt{10}\right)$,
  $\phi_5=0.74809657\ldots$, $\phi_6=11228544/\left(97802181 -
  132043\sqrt{396129}\right)$, and
  $\phi_{10}=29611698560/\left(23657426736+4919428689\sqrt{6}\right)$ are
  given in Appendix D of Ref. \cite{addinfoarxiv}.}
\label{Tab:numcomp}
\end{table}
\begin{figure}
\centering
\includegraphics[width=0.95\columnwidth]{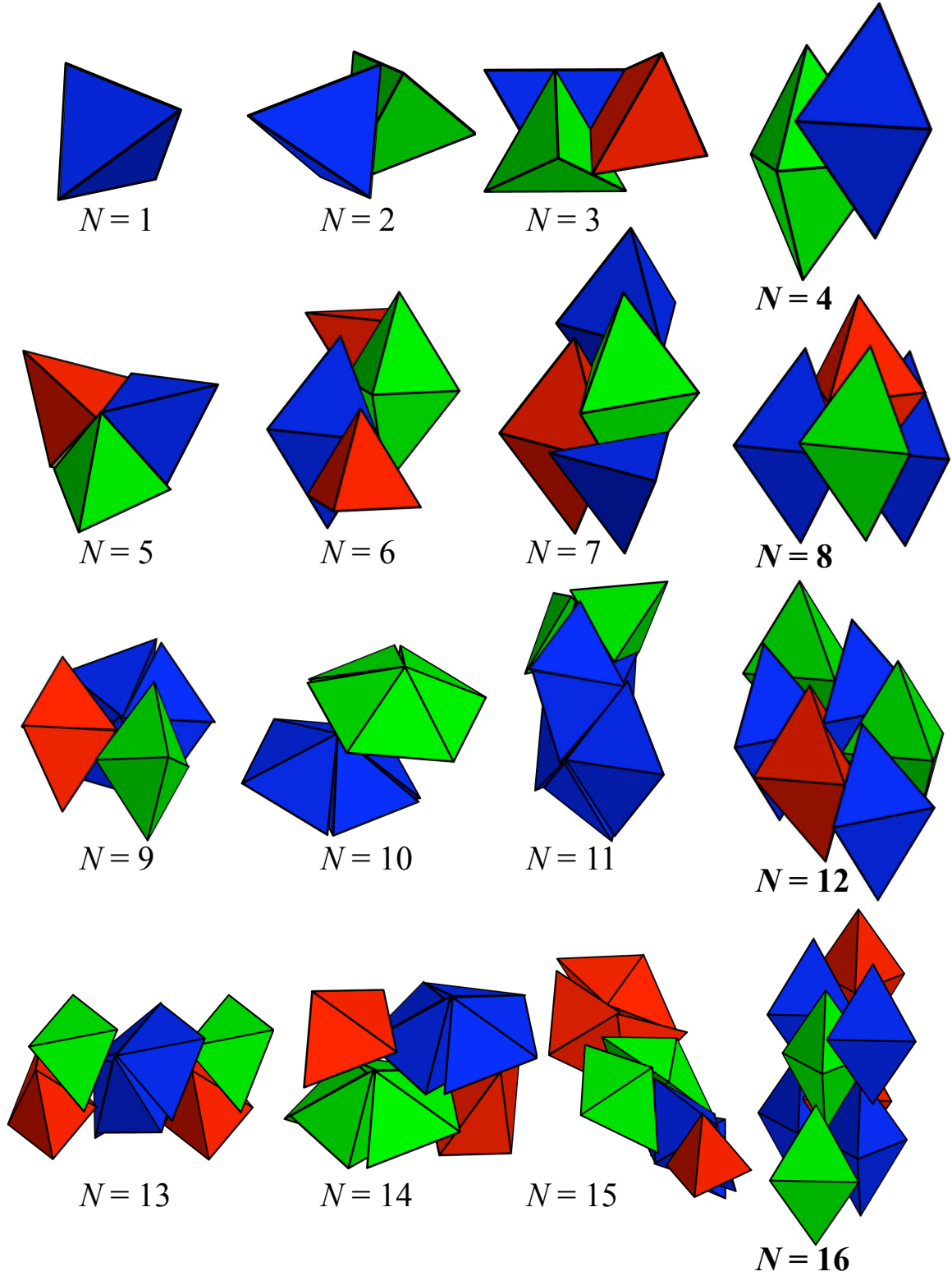}
\caption{(color online) Dense packings with up to 16 tetrahedra per cell as
  obtained from Monte Carlo simulations. The densest dimer packing is observed
  for $N=4,8,12,16$.}
\label{Fig:numfigs}
\end{figure}

For larger simulation cells crystalline packings are harder to achieve. The
$N=10$ packing consists of two pentagonal dipyramids, perfect in the sense
that four of each set of five tetrahedra are arranged face-to-face; the fifth
tetrahedron is oriented in such a way to distribute the (obligatory) gap
\cite{conway06} evenly on its two sides. In the cases $N=9$, 11, and 15 we
found no clear density gap separating very dense packings from less dense
packings. Either the density gap does not exist, or our simulations were not
successful in finding the optimal packings. The latter must be the case for
$N=15$, since a dimer packing with a vacancy can give a density of $\phi =
\tfrac{3750}{4671} = 0.802825\ldots$ Instead, the compressions became trapped
in disordered configurations with a network of pentagonal dipyramids, similar
to that found in previous simulations \cite{akbariNature09}. It is noteworthy
that all packings for $N\ge4$ achieve a density $\phi>\phi_{\rm sphere}$.

\section{Conclusions}

We have found the densest known packing of tetrahedra with density $\phi =
\tfrac{4000}{4671}$. This result was obtained as the optimal solution of a
three-parameter family of dimer packings, which is a generalization of
one-parameter and two-parameter families of packings recently reported with
lower maximum densities. Interestingly, the densest packing is not the most
symmetric one. Isobaric Monte Carlo simulations with variable cell shape
starting from random initial conditions recover the same high packing density
within $10^{-5}$ for small systems containing 4, 8, 12, and 16 tetrahedra.  We
also discovered new candidates for densest packings with 3, 5, 6, and 10
tetrahedra per unit cell. The analytical and numerical results combined
suggest that the packing density reported here could be the highest
achievable, at least for small $N$.

The dimer structures are remarkable in the relative simplicity of the
4-tetrahedron unit cell as compared to the 82-tetrahedron unit cell of the
quasicrystal approximant \cite{akbariNature09}, whose density is only slightly
less than that of the densest dimer packing. The dodecagonal quasicrystal is
the only ordered phase observed to form from random initial configurations of
large collections of tetrahedra at moderate densities. It is thus interesting
to note that for some certain values of $N$, when the small systems do not
form the dimer lattice packing, they instead prefer clusters (motifs) present
in the quasicrystal and its approximant, predominantly pentagonal dipyramids.
This suggests that the two types of packings -- the dimer crystal and the
quasicrystal/approximant -- may compete, raising interesting questions about
the relative stability of the two very different structures at finite
pressure. These questions will be explored in a forthcoming paper.

\section*{ACKNOWLEDGEMENTS}

E.R.C. is grateful to Jeffrey C. Lagarias for his many efforts to make this
research possible. We are very grateful to him and to the anonymous referees
for their helpful comments on the manuscript.

E.R.C. acknowledges a National Science Foundation RTG grant
DMS-0801029. M.E. acknowledges the support of a postdoctoral fellowship from
the Deutsche Forschungsgemeinschaft (EN 905/1-1). S.C.G. acknowledges support
from the Air Force Office of Scientific Research under MURI grant
FA9550-06-1-0337.

\eject


\begin{appendix}

\setlength{\parindent}{0pt}

\section{Pictures and equations}

This appendix contains figures of the double dimer packings (Figures A1-A3),
the three-dimensional restricted parameter space (Figure A4), and the volume
contour function (figure A5), as well as visualizations of the internal
symmetry of the restricted parameter space (Figures A6 and A7).

\medskip

{\bf Figure A0} introduces the orientation of the axes for all following
figures. We distinguish between the vector coordinate axes $\<x,y,z\>$ and the
packing coordinate axes $\<u,v,w\>$. In the former case, the viewpoints are
$\<1,1,1\>$, $\<0,-\cos{\pi\over 10},\sin{\pi\over 10}\>$, $\<0,0,1\>$, and
the ranges are $-{32\over 5}\le x,y,z\le +{32\over 5}$ (Figures A1-A3) or
$-{15\over 4}\le x,y,z\le+{15\over 4}$ (Figure A6). In the latter case, the
viewpoints are $\<\cos{2\pi\over 15}\cos{\pi\over 15},\cos{2\pi\over
  15}\sin{\pi\over 15},\sin{2\pi\over 15}\>$, $\<\sin{\pi\over
  120},0,\cos{\pi\over 120}\>$, and the range is $-{1\over 8}\le u,v,w\le
+{1\over 8}$ (Figures A4-A5 and A7).

\medskip

{\bf Figure A1} (4 pages) gives the notation and basic equations, and shows
the nearest intersecting neighbors. The colors of the equations match the
color of the respective neighbors. Since the double dimer packings are cluster
transitive on dimers, we consider only the dimer at the origin. Its positive
and negative neighbors are shown for three pairs of special packings at the
optimal points, the central points, and the symmetric points in parameter
space (Figure A4). The pairs of packings are related by the symmetry $T$
(Figure A7). Notice that the sets of negative neighbors are always the same,
whereas the sets of positive neighbors are different. The neighbors for the
optimal packings are related by a reflection through the plane $+2x-y-z = 0$,
the neighbors for the central packings by a reflection through the plane
$+2x-y-z = 0$, and the neighbors for the symmetric packings by a reflection
through the plane which contains $\pm a\pm b$.

\medskip

{\bf Figure A2} (2 pages) lists all 20 possible neighbors that can intersect
with the dimer at the origin together with their linear incidence conditions.
The entire figure is a floor plan diagram of the \emph{neighbors}, ordered by
layer. Notice how the positive layers and the negative layers complement each
other. The even layers are related by symmetry, which is broken for the odd
layers due to the presence of a small translation by the vector $d$.

\medskip

{\bf Figure A3} (2 pages) shows larger portions of the layers for the optimal
packings.  The entire figure is a floor plan diagram of the \emph{packing},
ordered by layer. Notice the almost triangular/hexagonal planar basis
mentioned in Remarks 2.  We show both optimal packings for comparison to
demonstrate the subtle differences. The symmetry $T$ that maps the packings onto
each other is a two-fold rotation around an axis running from bottom left to
top right in the left columns (view along the 'almost' three-fold axis).

\medskip

{\bf Figure A4} (1 page) depicts the restricted three-dimensional parameter
space of dense double dimer packings $\<u,v,w\>$ together with special planes,
lines, and points. The shape is a strongly deformed tetrahedron extended along
the $u$-axis. It is bounded by four \emph{boundary planes}. Packings with
maximal (minimal) density are located on the \emph{maximal (minimal) lines}
(Figure A5). The optimal plane contains the two \emph{optimal points}, the
origin, and the \emph{symmetric line}.

\medskip

{\bf Figure A5} (1 page) visualizes contours of the unit cell volume $V =
{9\over 25}(117+60u_{}^2-80uv-80v_{}^2)$ in parameter space. The lattice
volume is a hyperbolic paraboloid in $\<u,v,V\>$.  The contours are
hyperbolic cylinders in $\<u,v,w\>$ parallel to the $w$-axis. Their
intersection with the restricted parameter space (gray) allows to identify the
locations of volume extrema on the maximal and minimal lines.

\medskip

{\bf Figure A6} (2 pages) demonstrates the distortion in the geometry of
nearest neighbors in vector space $\<x,y,z\>$. The colored spheres (colors
match the colors in Figures A1-A3) are positioned at the centers of the twelve
dimers in the layer 0, $\pm1$, closest to the dimer at the origin. Lines
connect dimers that are adjacent in the dimer lattice. A deformation of the
neighbor shell occurs when moving along the various lines in parameter space
$\<u,v,w\>$.  This deformation is illustrated in each sub-figure by overlaying
three neighbor configurations along these lines.  Packings along the central
line have the symmetry $T$, which is a two-fold rotation about the axis $p =
\<+2,-1,-1\>$ (see proof of Theorem 2). Notice how these packings and
distortions look more symmetric than the others. Packings along the maximal
lines form a double family. Packings along the minimal lines form two single
families that are distorted in complementary ways.

\medskip

{\bf Figure A7} (1 page) illustrates the effect of the symmetry $T$ on the
restricted parameter space shown in Figure A4 by identifying the equivalent
packings under this symmetry: $T\<u,v,w\> = \<-u,-v,w\>$. Within contour
planes of the form $w = *$, i.e. normal to the $w$-axis, $T$ acts as a
combination of a shear and a two-fold rotation around the symmetric line
$\<0,0,w\>$. Equivalent points are colored identically.  Along the minimal
lines, opposite points on the same line are related. Notice that each minimal
line is contained in a single contour plane.  Along the maximal lines,
opposite points on opposite lines are related.  Notice that the two optimal
points are contained in a single contour plane. The final sub-figure shows the
change of the contour plane intersection with the maximal lines for a sequence
of five contour planes.

{\centering
\bigskip\includegraphics[width=\linewidth]{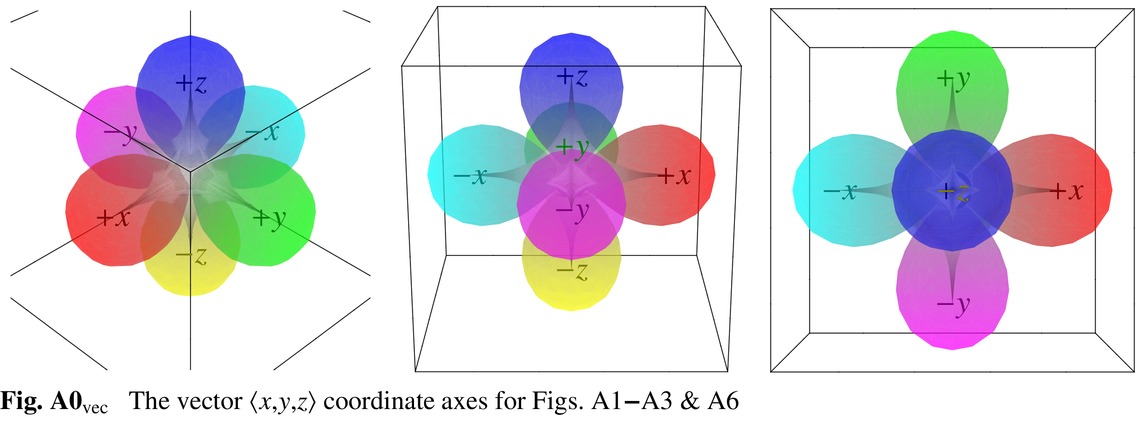}
\bigskip\includegraphics[width=\linewidth]{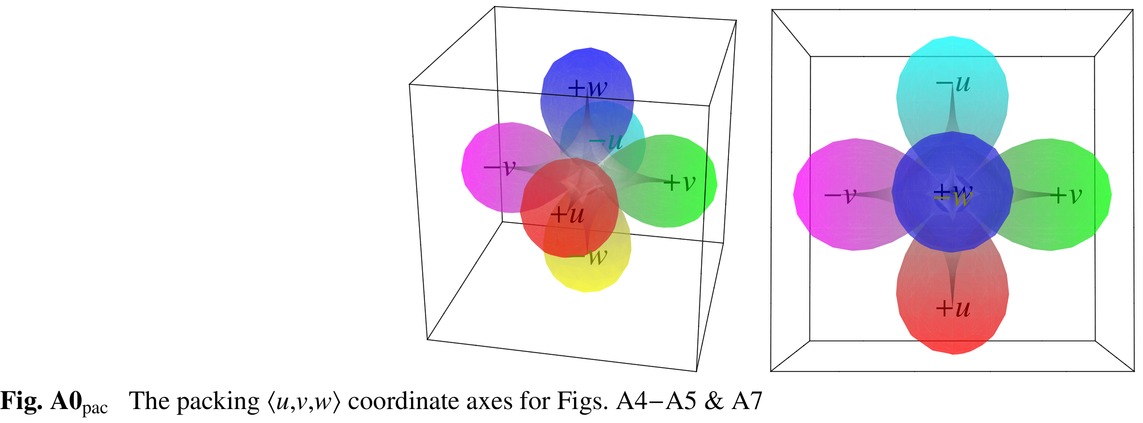}
\eject\includegraphics[width=0.85\linewidth]{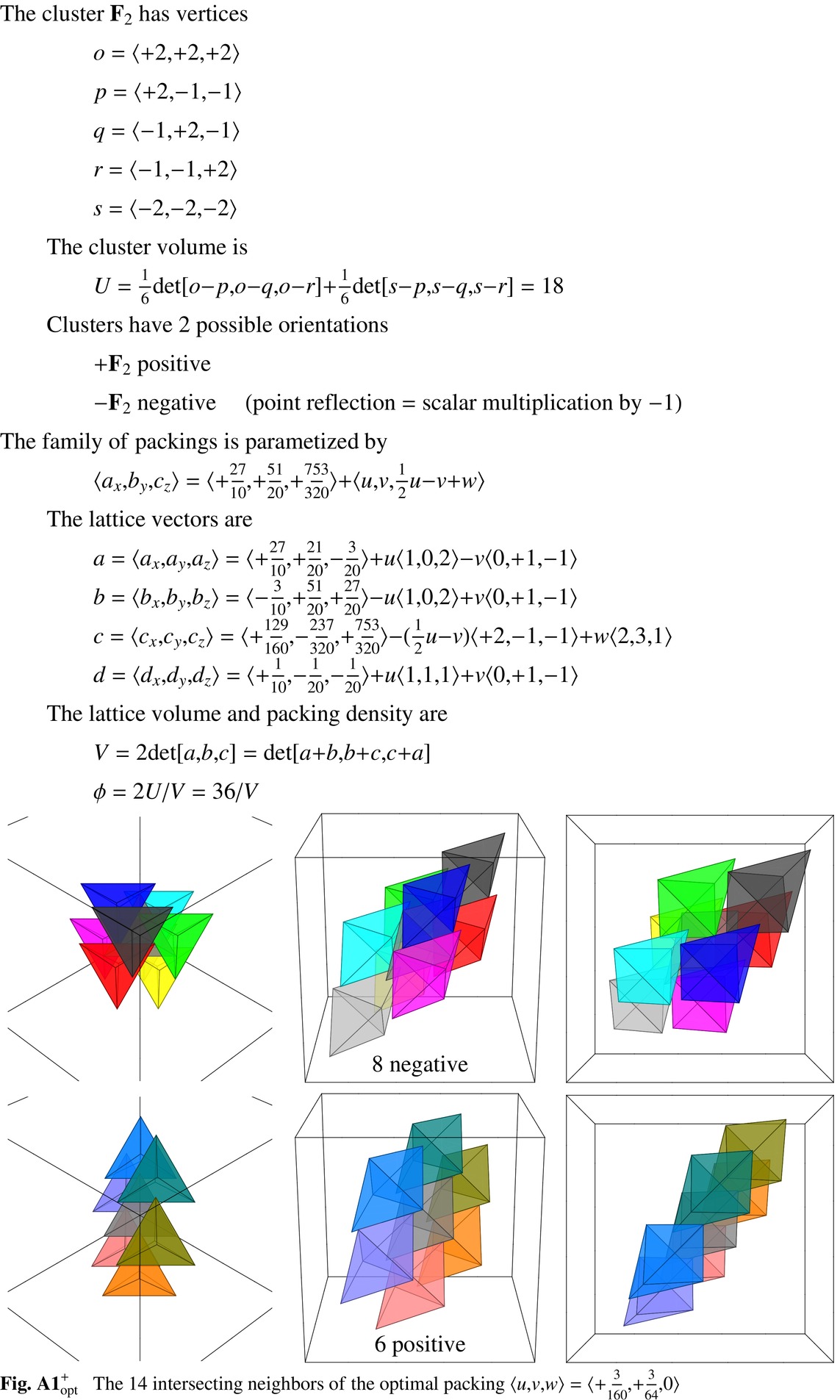}
\eject\includegraphics[width=0.85\linewidth]{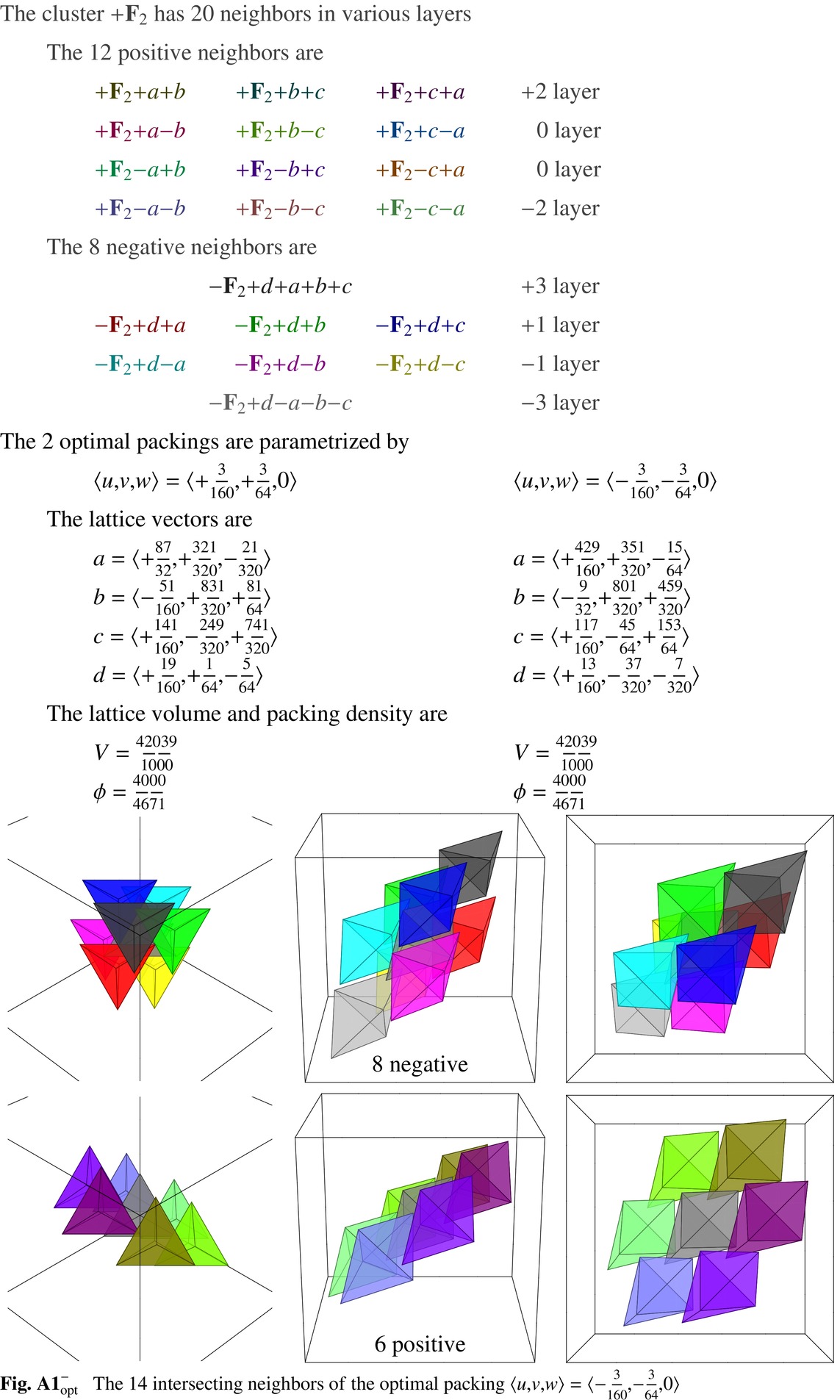}
\eject\includegraphics[width=0.95\linewidth]{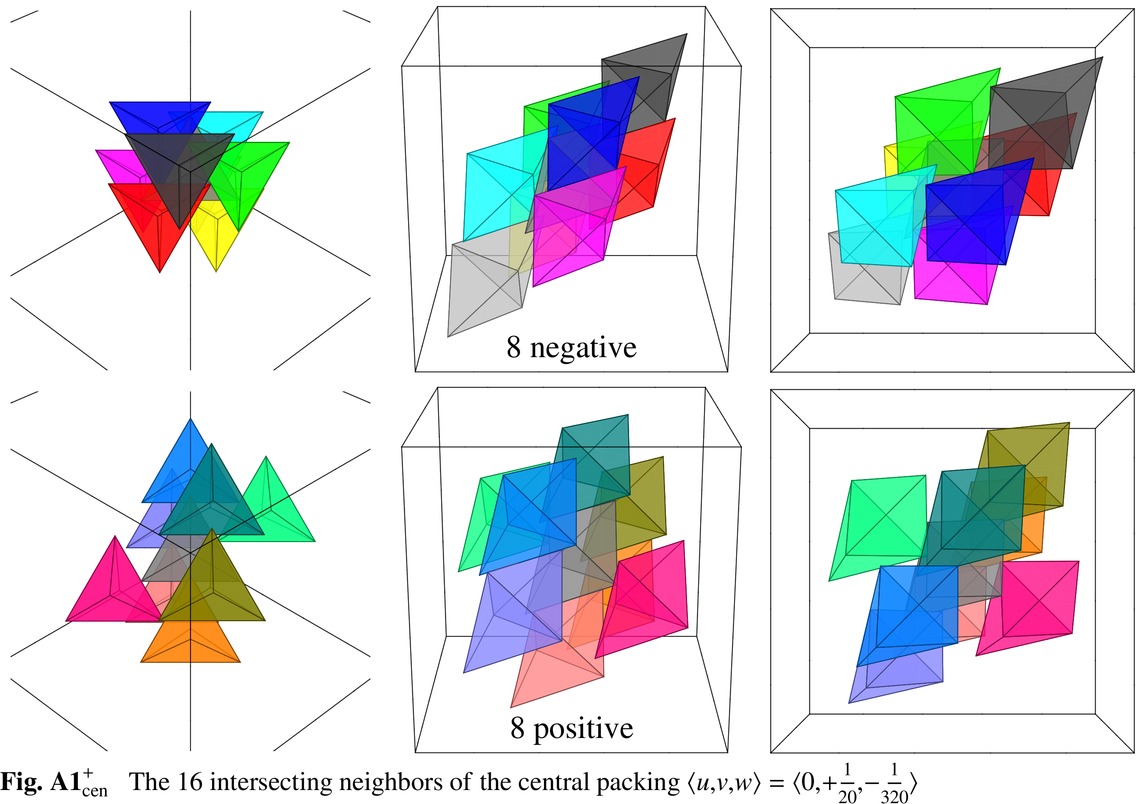}
\vskip .5in\includegraphics[width=0.95\linewidth]{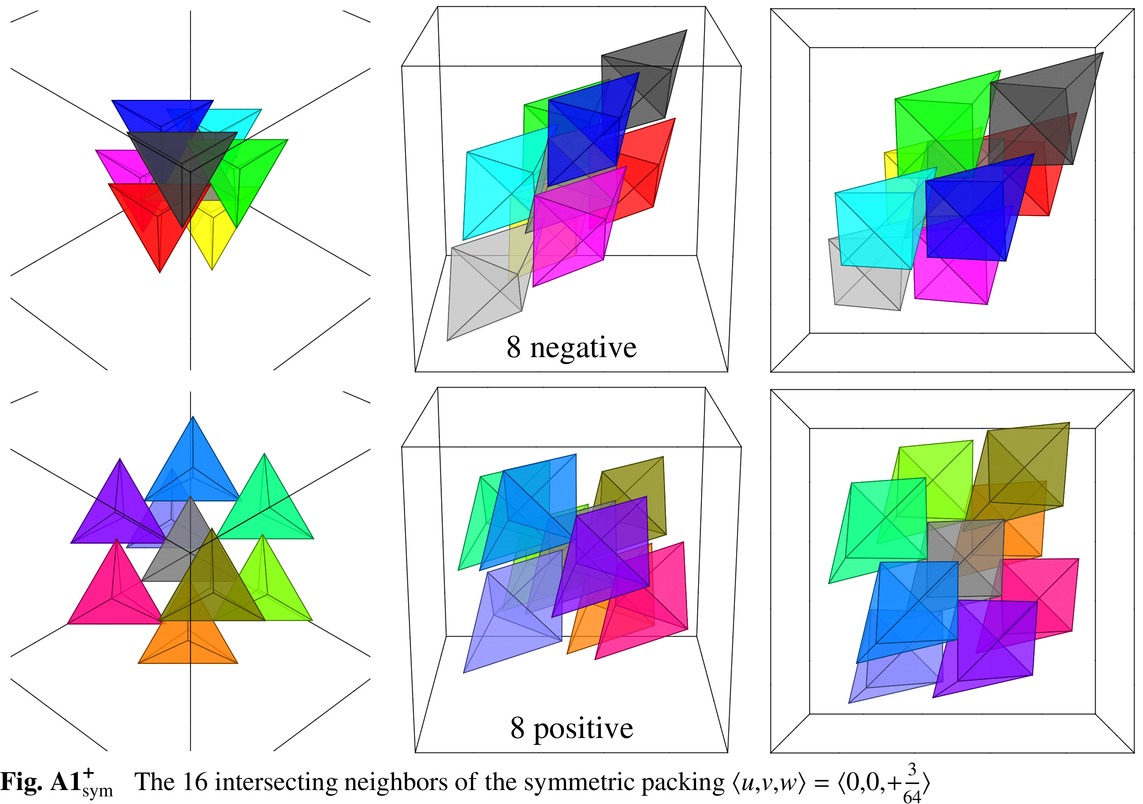}
\eject\includegraphics[width=0.95\linewidth]{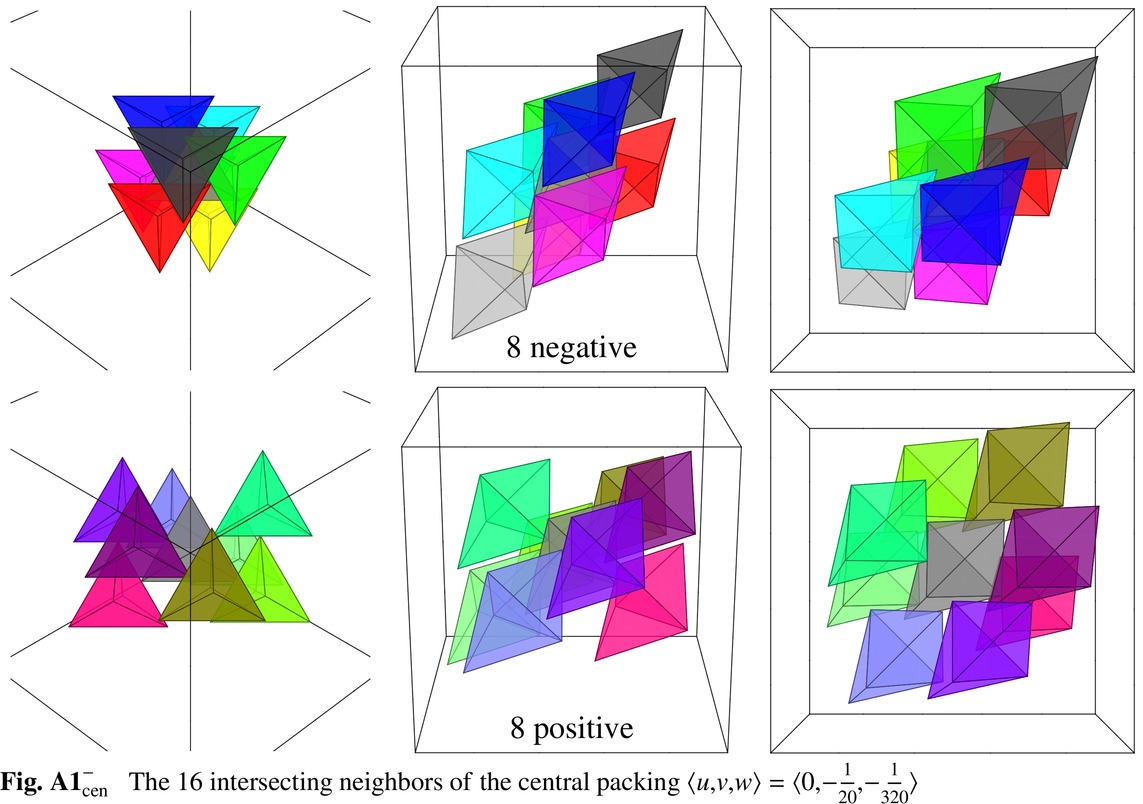}
\vskip .5in\includegraphics[width=0.95\linewidth]{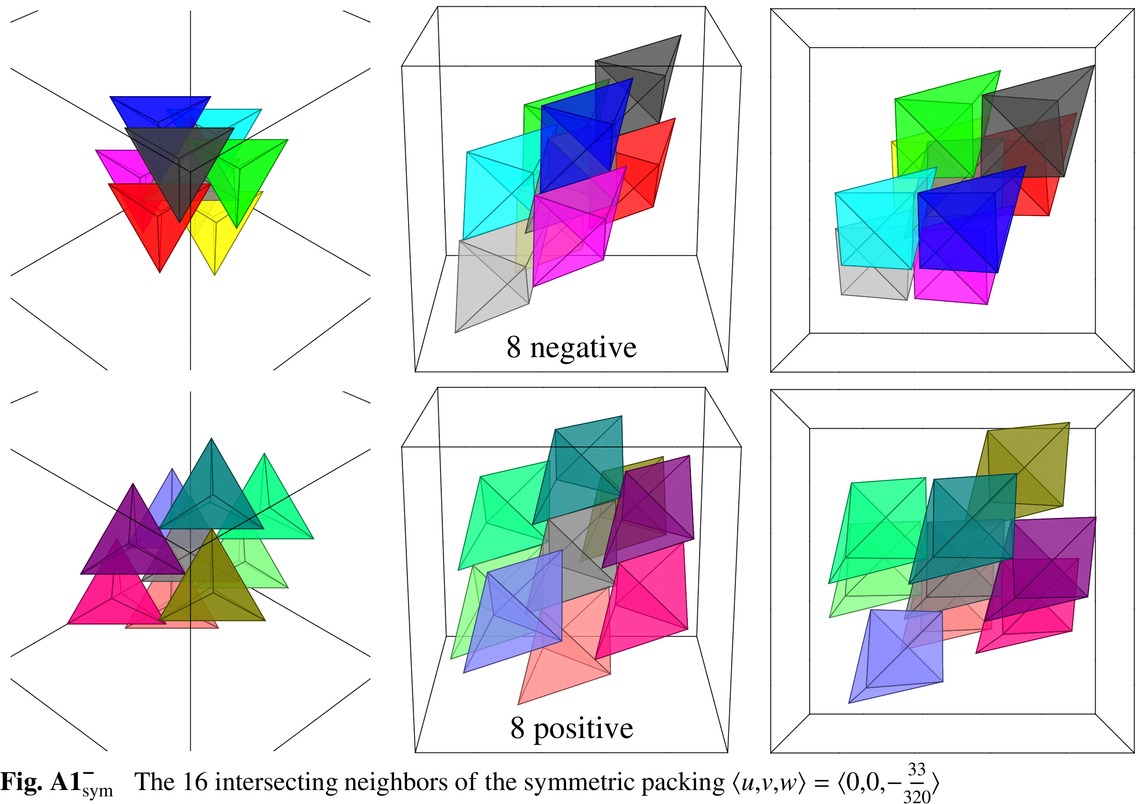}
\eject\includegraphics[width=0.80\linewidth]{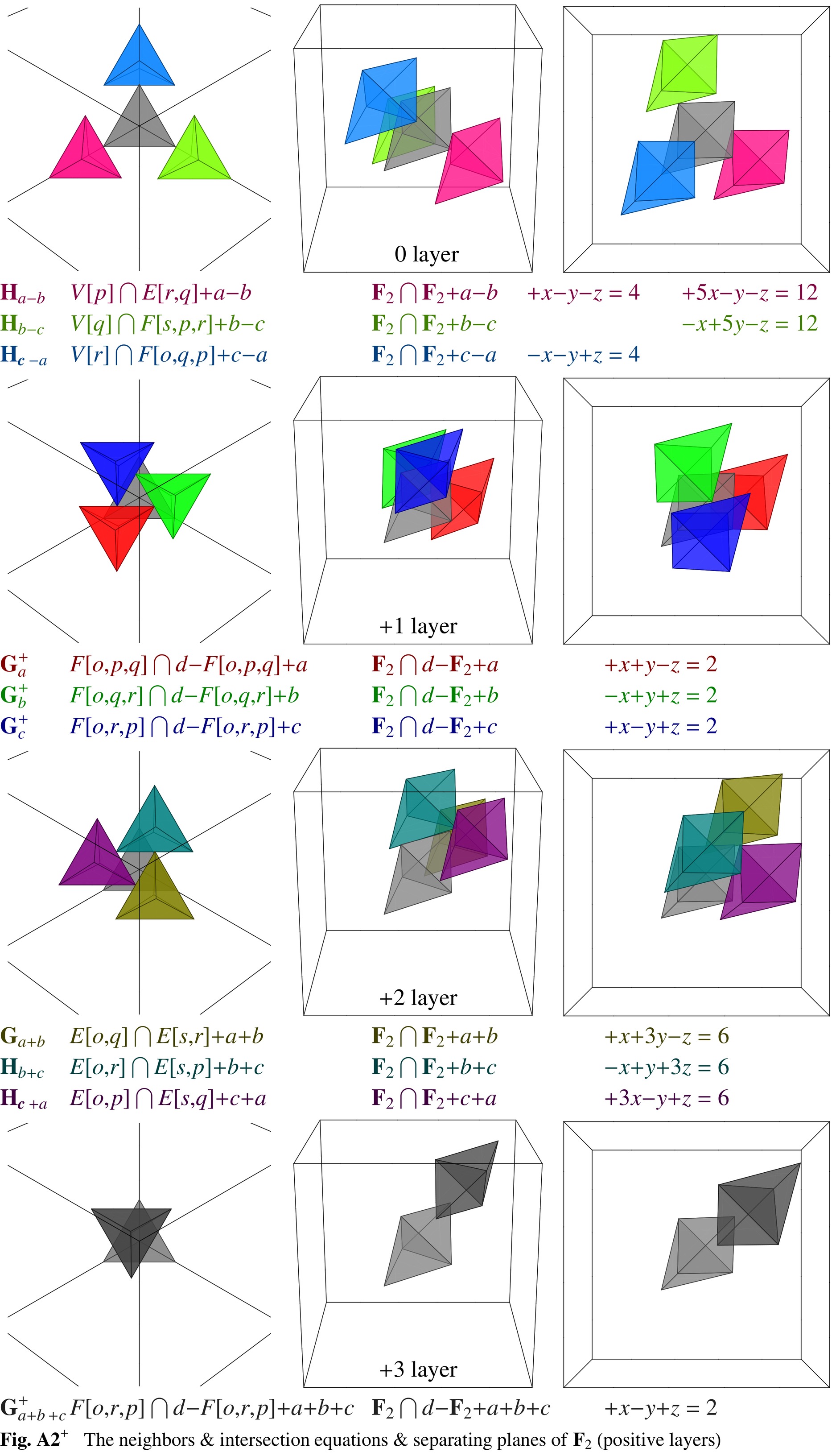}
\eject\includegraphics[width=0.80\linewidth]{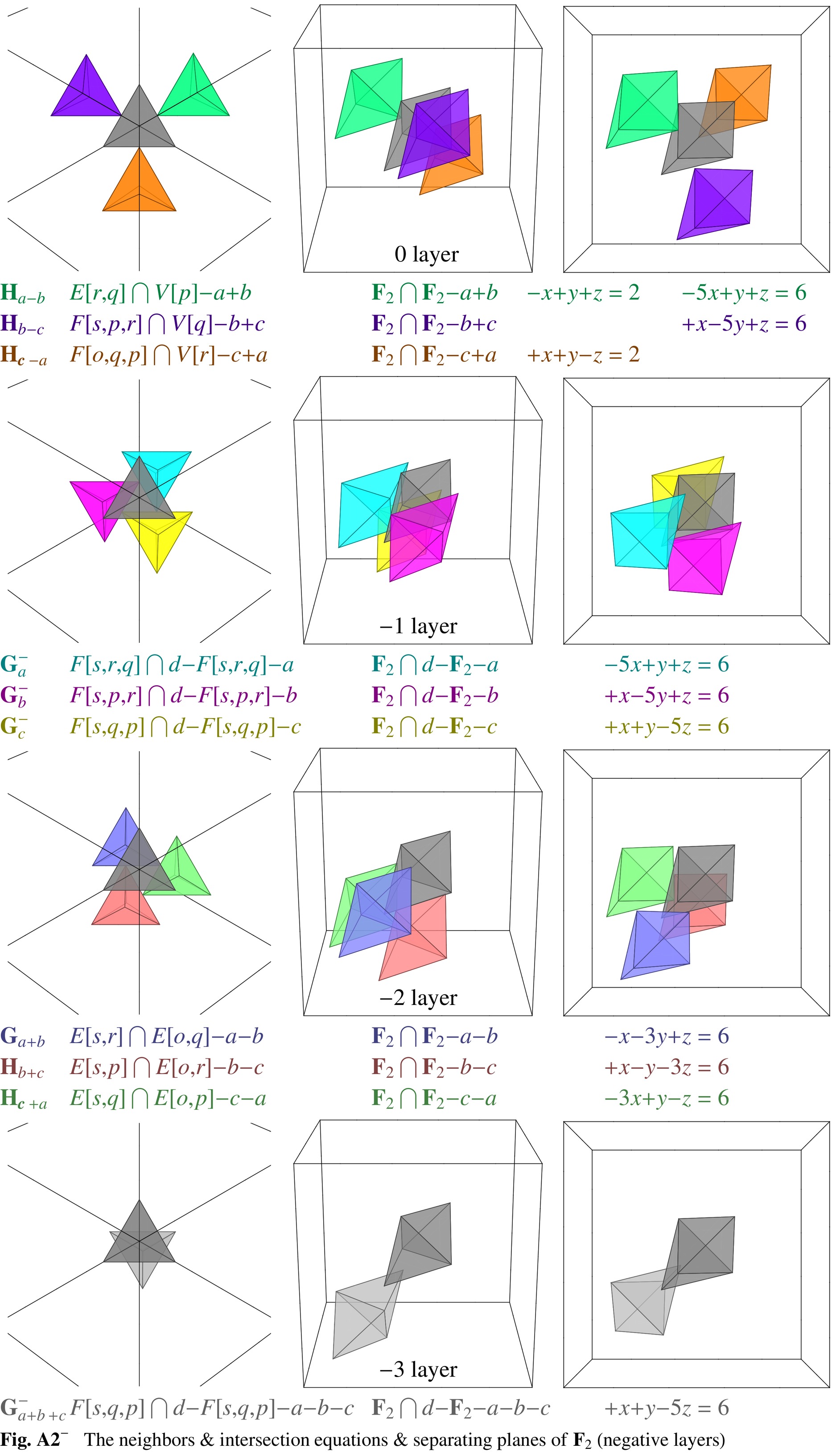}
\eject\includegraphics[width=1.00\linewidth]{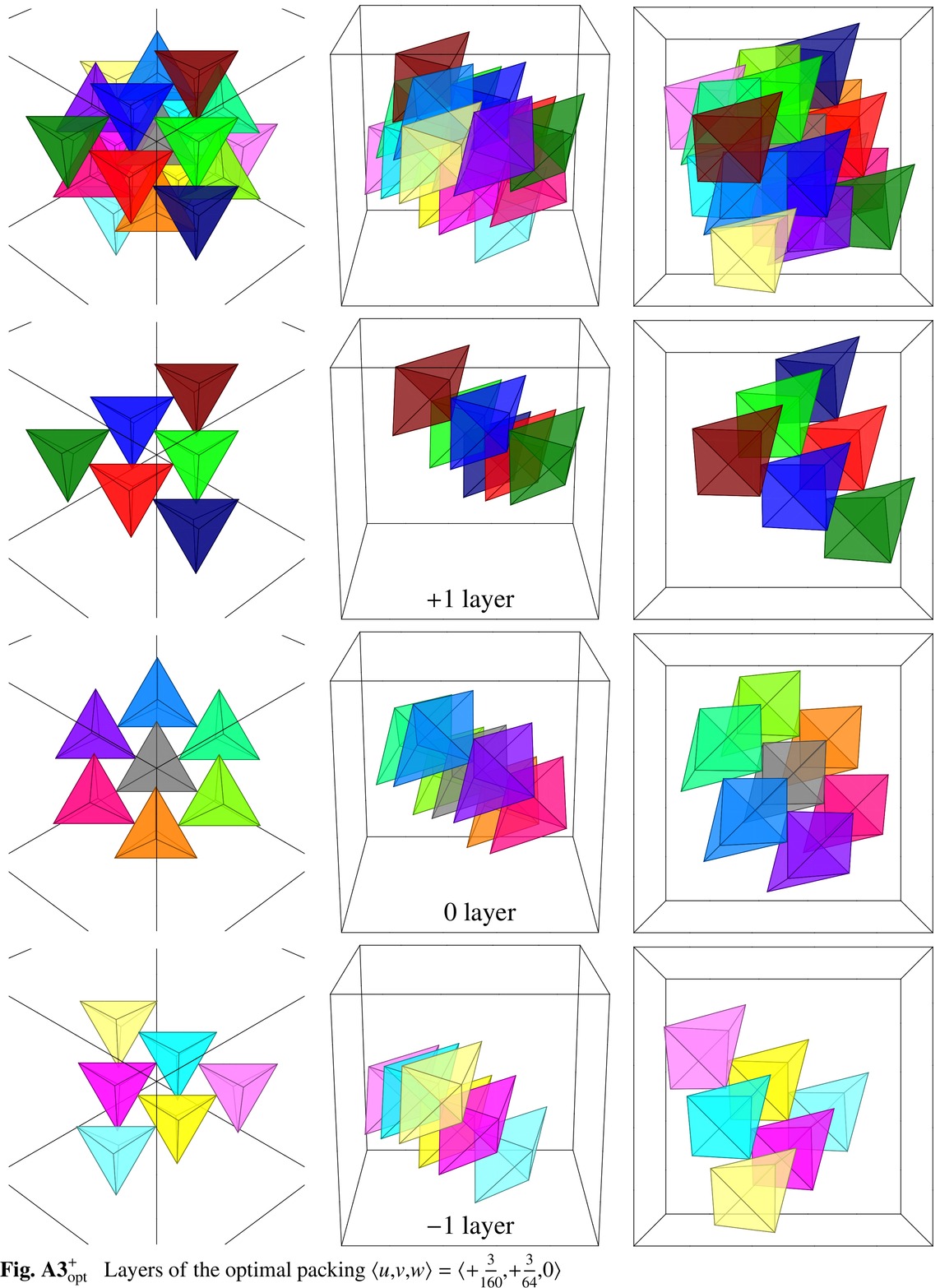}
\eject\includegraphics[width=1.00\linewidth]{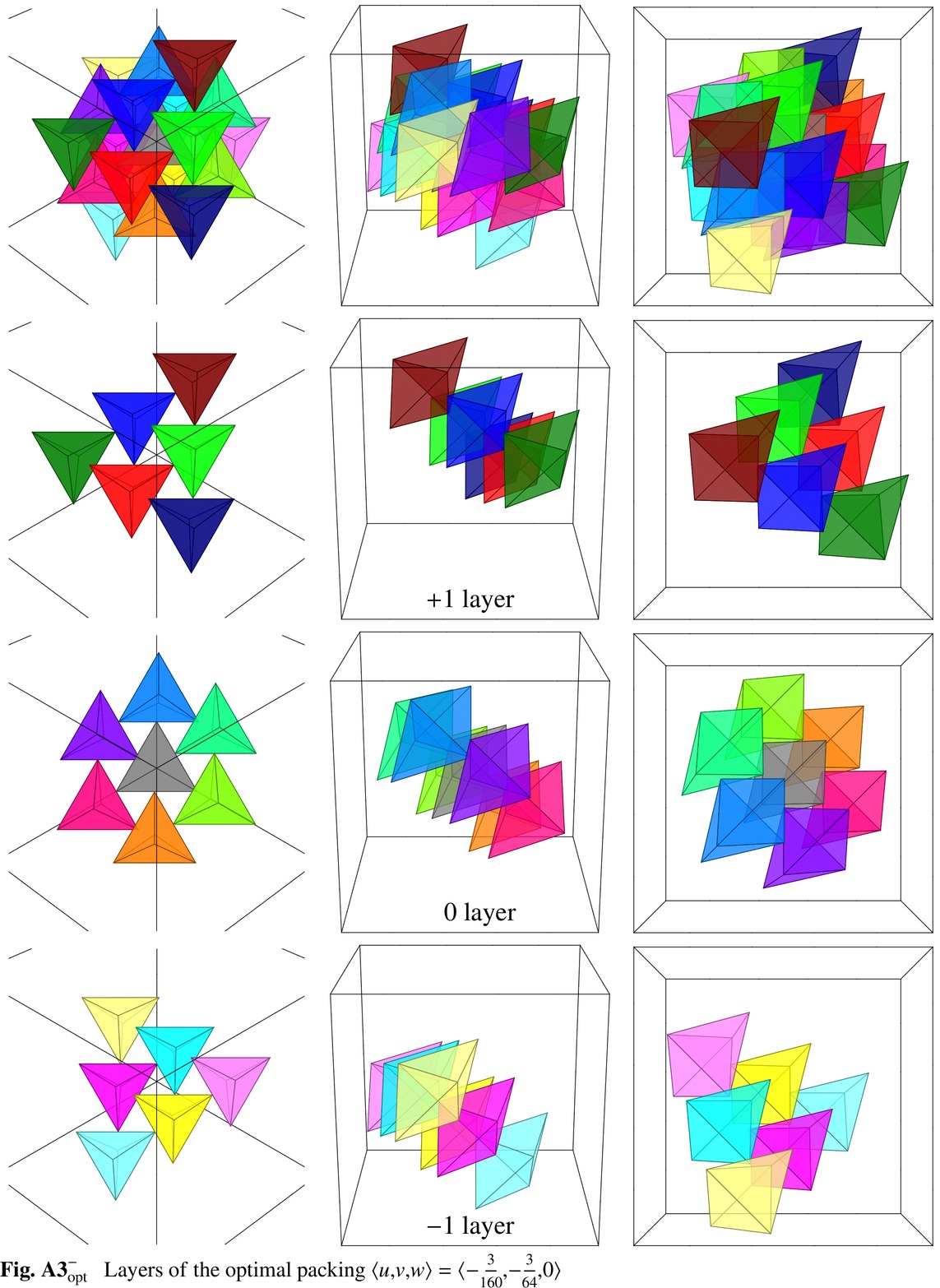}
\eject\includegraphics[width=0.83\linewidth]{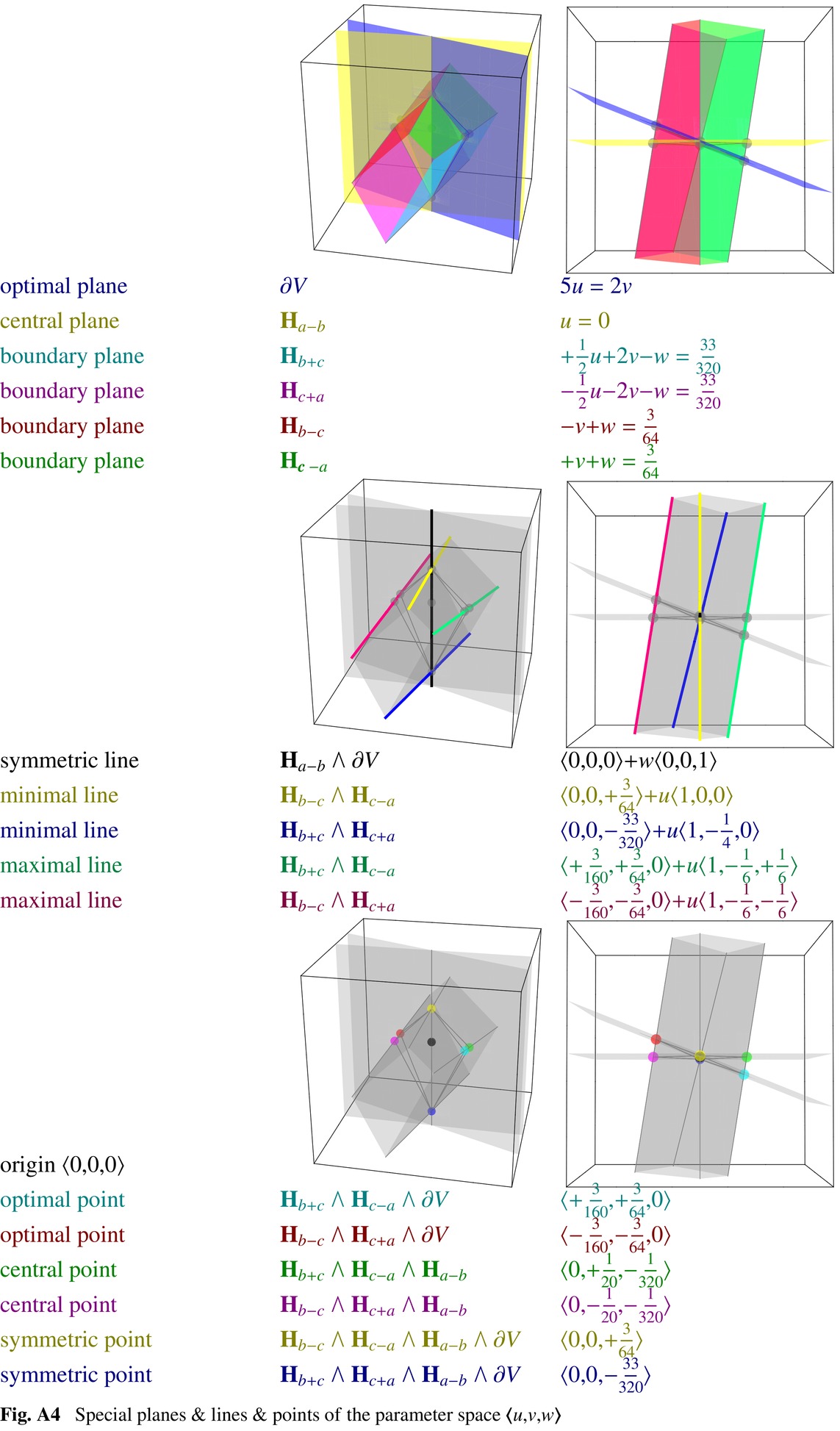}
\eject\includegraphics[width=0.83\linewidth]{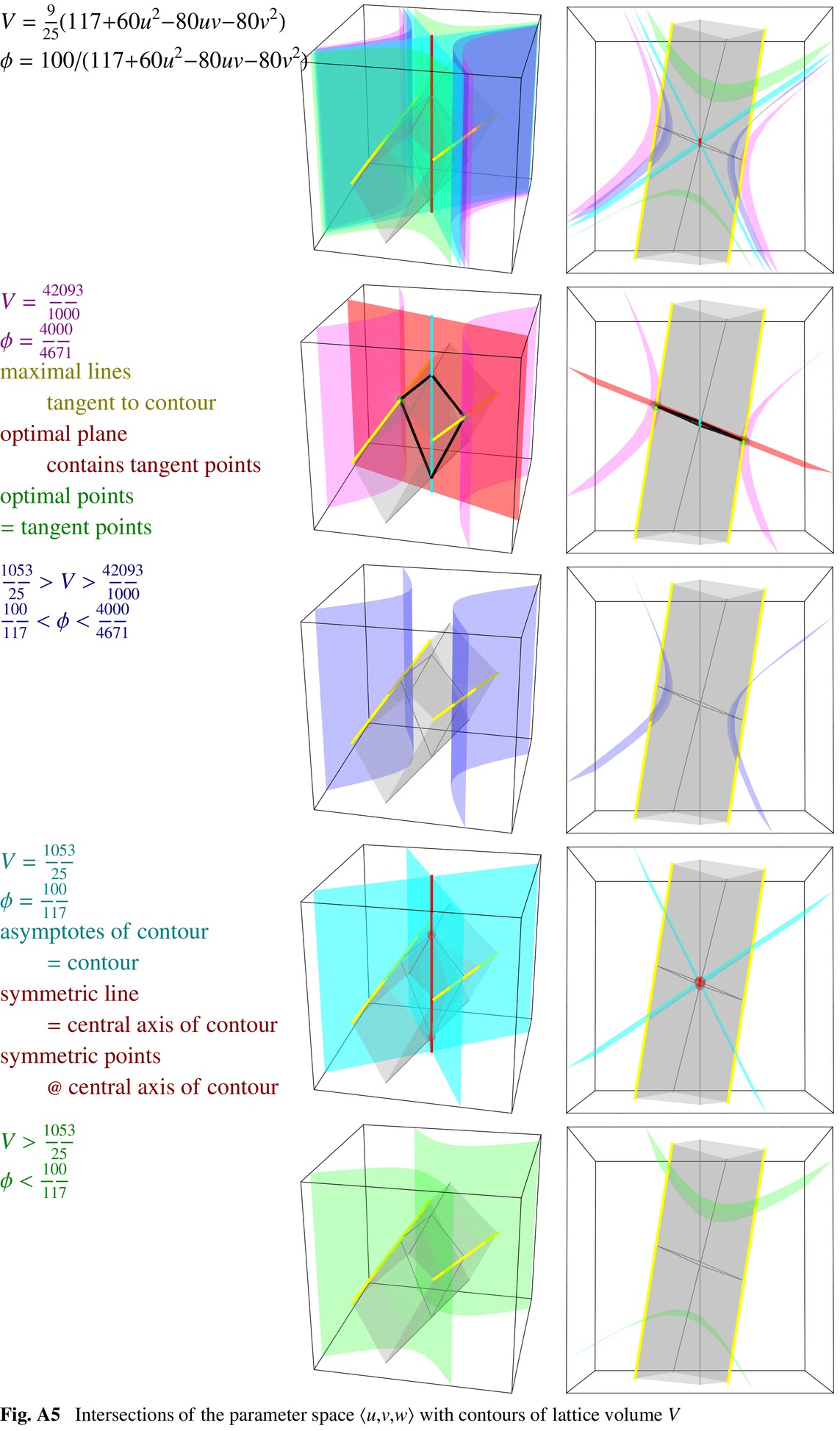}
\eject\includegraphics[width=0.90\linewidth]{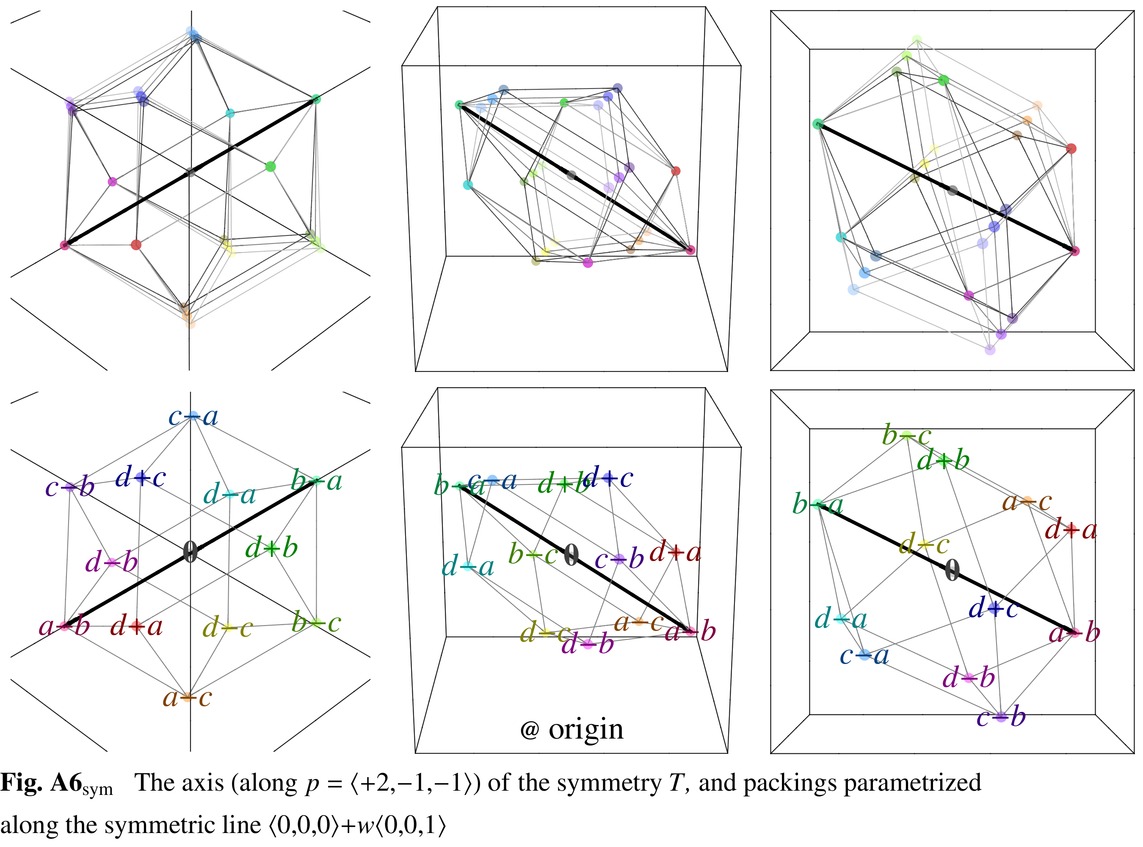}
\vskip .5in\includegraphics[width=0.90\linewidth]{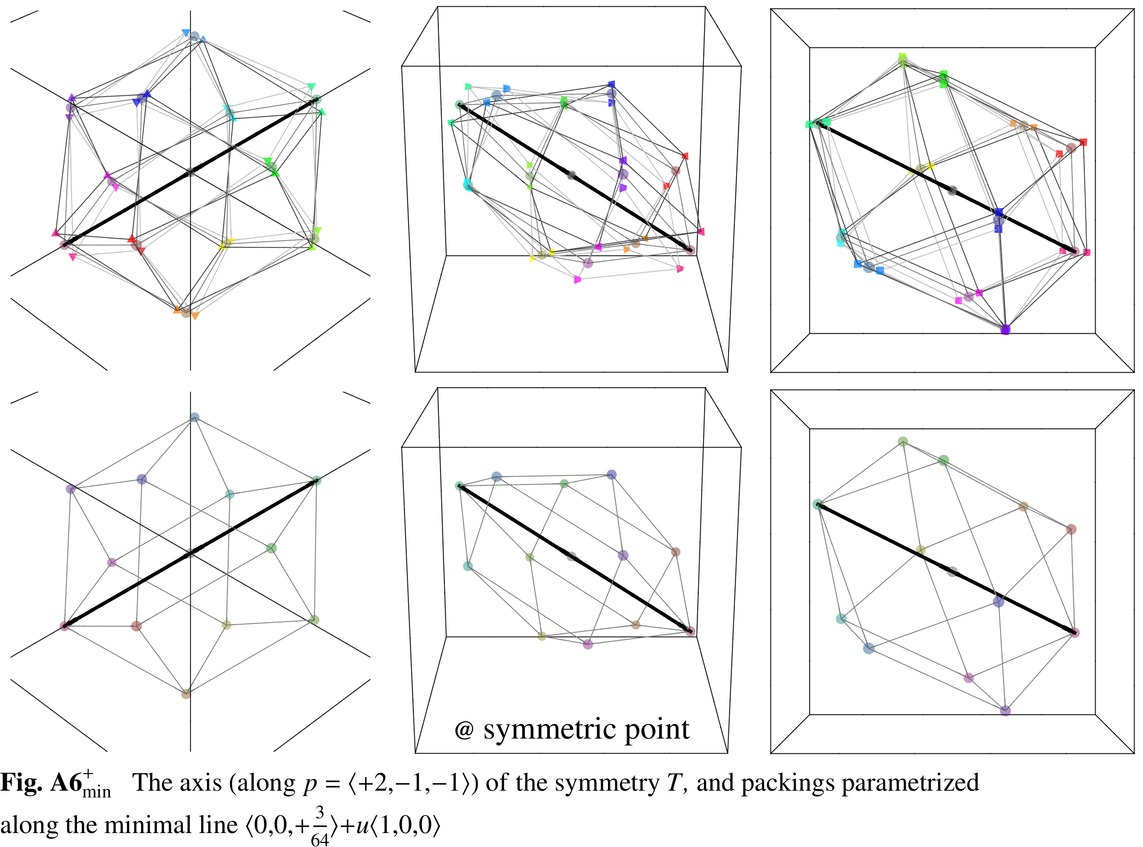}
\eject\includegraphics[width=0.90\linewidth]{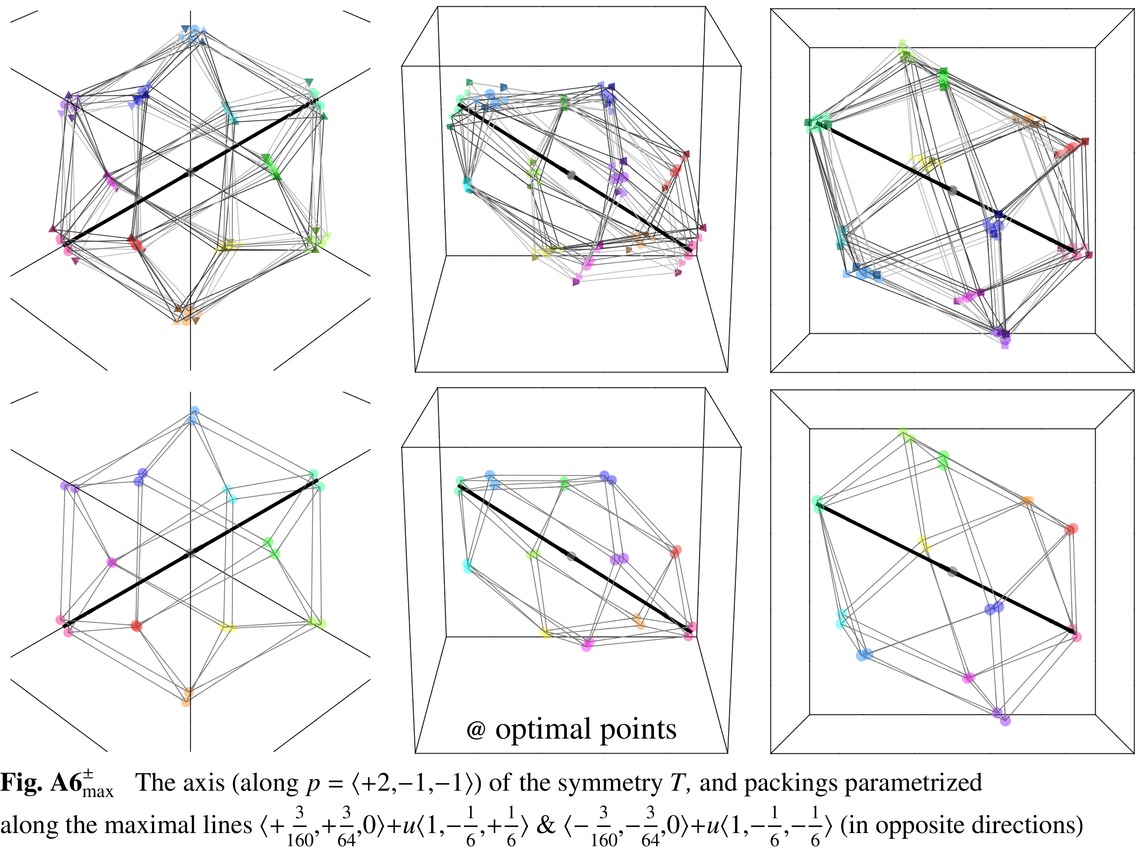}
\vskip .5in\includegraphics[width=0.90\linewidth]{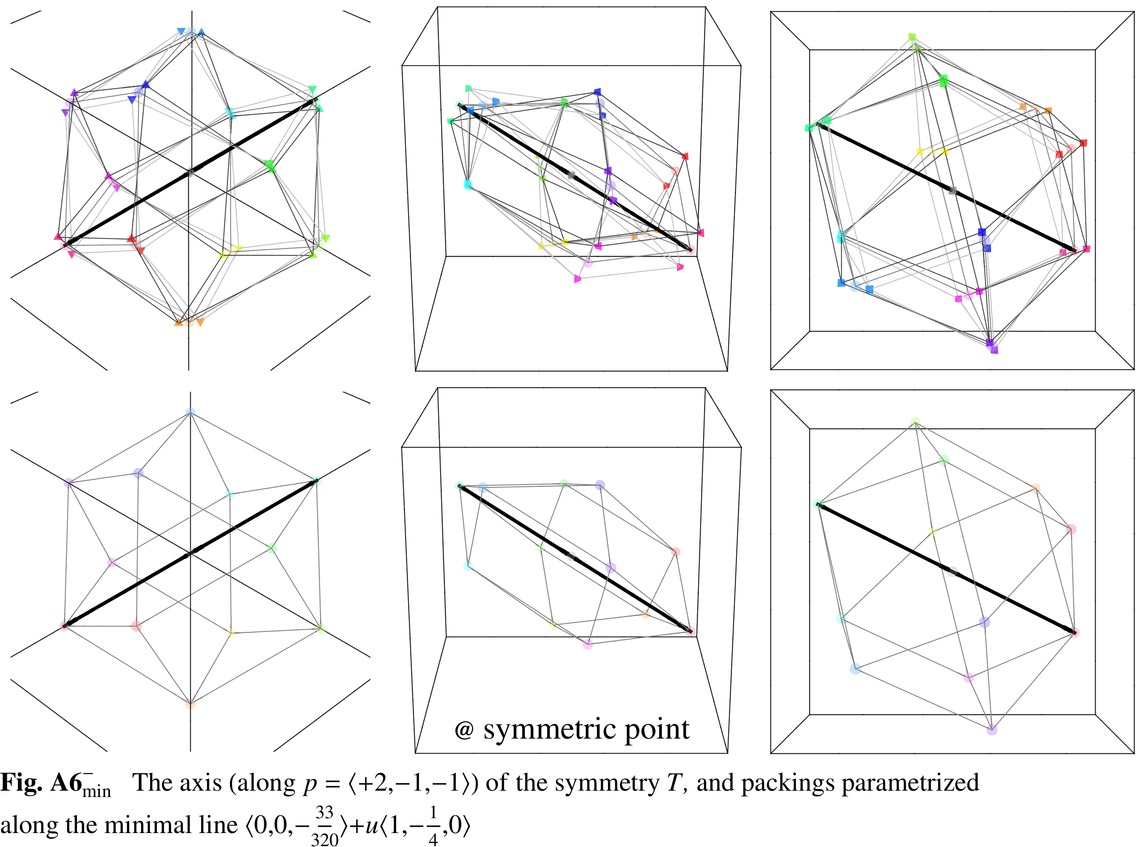}
\eject\includegraphics[width=0.83\linewidth]{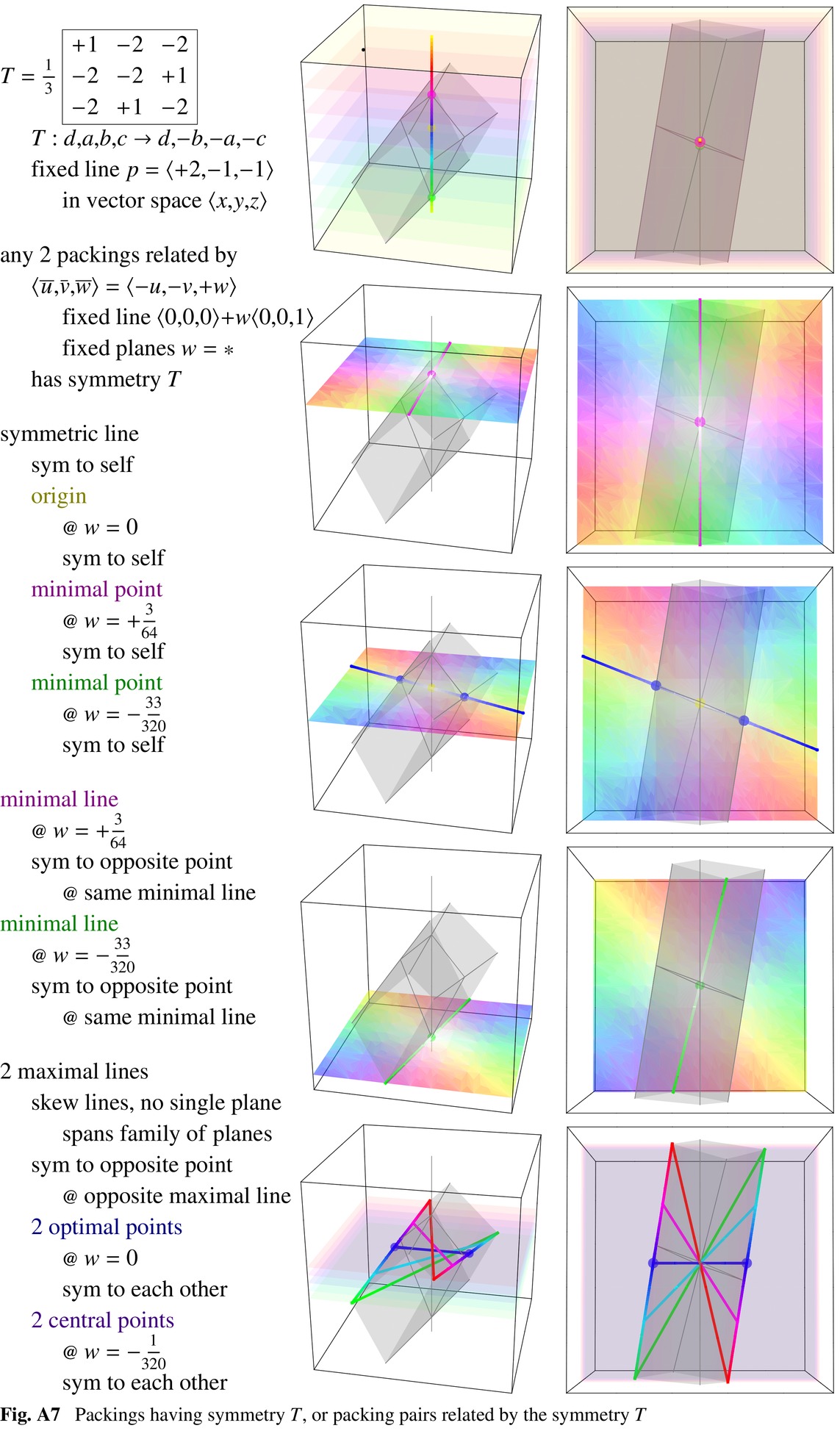}
}


\newcommand{\matrixold}[1]{\begin{matrix}#1\end{matrix}}

\section{Symmetry}

\medskip{\bf B0\tab Origin $\<0,0,0\>$}

\medskip\tabb The symmetry group is ${\bf Z}_2^2$

\medskip\tabb\tabb $T_0^{} = \left[\scriptsize\matrixold{1&0&0\cr 0&1&0\cr 0&0&1}\right] = T_0^{-1}$

\medskip\tabb\tabb\tab $T_0^{}\,:\,{\rm packing}\,\<0,0,0\>\,\rightarrow\,{\rm packing}\,\<0,0,0\>$
\par\tabb\tabb\tab $\hskip 33pt d,a,b,c\hskip 37pt d,+a,+b,+c$
\par\tabb\tabb\tab $\hskip 28.5pt o,p,q,r,s\hskip 43pt o,p,q,r,s$

\medskip\tabb\tabb $T_c^{} = {1\over 624}\left[\scriptsize\matrixold{+538&+172&-344\cr +79&+466&+316\cr -251&+502&-380}\right] = T_c^{-1}$

\medskip\tabb\tabb\tab $T_c^{}\,:\,{\rm packing}\,\<0,0,0\>\,\rightarrow\,{\rm packing}\,\<0,0,0\>$
\par\tabb\tabb\tab $\hskip 33pt d,a,b,c\hskip 37pt d,+a,+b,-c$

\medskip\tabb\tabb $T_{ab}^{} = {1\over 208}\left[\scriptsize\matrixold{+98&-196&-24\cr -165&-86&-36\cr -55&-98&+196}\right] = T_{ab}^{-1}$

\medskip\tabb\tabb\tab $T_{ab}^{}\,:\,{\rm packing}\,\<0,0,0\>\,\rightarrow\,{\rm packing}\,\<0,0,0\>$
\par\tabb\tabb\tab $\hskip 37pt d,a,b,c\hskip 37pt d,-a,-b,+c$

\medskip\tabb\tabb $T_{abc}^{} = {1\over 3}\left[\scriptsize\matrixold{+1&-2&-2\cr -2&-2&+1\cr -2&+1&-2}\right] = T_{abc}^{-1}$\tabb rotation by $\pm\pi$ about $p = \<+2,-1,-1\>$

\medskip\tabb\tabb\tab $T_{abc}^{}\,:\,{\rm packing}\,\<0,0,0\>\,\rightarrow\,{\rm packing}\,\<0,0,0\>$
\par\tabb\tabb\tab $\hskip 40pt d,a,b,c\hskip 37pt d,-a,-b,-c$
\par\tabb\tabb\tab $\hskip 35.5pt o,p,q,r,s\hskip 43pt s,p,r,q,o$

\medskip{\bf B1\tab Symmetric line $\<0,0,0\>+w\<0,0,1\>$}

\medskip\tabb The symmetry group is ${\bf Z}$

\medskip\tabb\tabb $S_{(w)}^{} = \left[\scriptsize\matrixold{1&0&0\cr 0&1&0\cr 0&0&1}\right]+{10\over 117}w\left[\scriptsize\matrixold{+2&-4&+8\cr +3&-6&+12\cr +1&-2&+4}\right] = S_{(-w)}^{-1}$

\medskip\tabb\tabb\tab $S_{(w)}^{}\,:\,{\rm packing}\,\<0,0,0\>\,\rightarrow\,{\rm packing}\,\<0,0,+w\>$
\par\tabb\tabb\tab $\hskip 41pt d,a,b,c\hskip 53pt d,a,b,c$

\medskip\tabb\tabb $S_{(w)}^{-1} = \left[\scriptsize\matrixold{1&0&0\cr 0&1&0\cr 0&0&1}\right]-{10\over 117}w\left[\scriptsize\matrixold{+2&-4&+8\cr +3&-6&+12\cr +1&-2&+4}\right] = S_{(-w)}^{}$

\medskip\tabb\tabb\tab $S_{(w)}^{-1}\,:\,{\rm packing}\,\<0,0,0\>\,\rightarrow\,{\rm packing}\,\<0,0,-w\>$
\par\tabb\tabb\tab $\hskip 42pt d,a,b,c\hskip 53pt d,a,b,c$

\eject{\bf B2\tab Symmetric line $\<0,0,0\>+w\<0,0,1\>$}

\medskip\tabb The symmetry group is ${\bf Z}_2^2$

\medskip\tabb\tabb $S_{(w)}^{}T_0^{}S_{(w)}^{-1} = T_0^{}$

\medskip\tabb\tabb\tab $S_{(w)}^{}T_0^{}S_{(w)}^{-1}\,:\,{\rm packing}\,\<0,0,w\>\,\rightarrow\,{\rm packing}\,\<0,0,w\>$
\par\tabb\tabb\tab $\hskip 67pt d,a,b,c\hskip 38pt d,+a,+b,+c$
\par\tabb\tabb\tab $\hskip 62.5pt o,p,q,r,s\hskip 44pt o,p,q,r,s$

\medskip\tabb\tabb $S_{(w)}^{}T_c^{}S_{(w)}^{-1} = T_c^{}-{20\over 117}w\left[\scriptsize\matrixold{+2&-4&+8\cr +3&-6&+12\cr +1&-2&+4}\right]$

\medskip\tabb\tabb\tab $S_{(w)}^{}T_c^{}S_{(w)}^{-1}\,:\,{\rm packing}\,\<0,0,w\>\,\rightarrow\,{\rm packing}\,\<0,0,w\>$
\par\tabb\tabb\tab $\hskip 67pt d,a,b,c\hskip 38pt d,+a,+b,-c$

\medskip\tabb\tabb $S_{(w)}^{}T_{ab}^{}S_{(w)}^{-1} = T_{ab}^{}+{20\over 117}w\left[\scriptsize\matrixold{+2&-4&+8\cr +3&-6&+12\cr +1&-2&+4}\right]$

\medskip\tabb\tabb\tab $S_{(w)}^{}T_{ab}^{}S_{(w)}^{-1}\,:\,{\rm packing}\,\<0,0,w\>\,\rightarrow\,{\rm packing}\,\<0,0,w\>$
\par\tabb\tabb\tab $\hskip 71pt d,a,b,c\hskip 38pt d,-a,-b,+c$

\medskip\tabb\tabb $S_{(w)}^{}T_{abc}^{}S_{(w)}^{-1} = T_{abc}^{}$\tabb rotation by $\pm\pi$ about $p = \<+2,-1,-1\>$

\medskip\tabb\tabb\tab $S_{(w)}^{}T_{abc}^{}S_{(w)}^{-1}\,:\,{\rm packing}\,\<0,0,w\>\,\rightarrow\,{\rm packing}\,\<0,0,w\>$
\par\tabb\tabb\tab $\hskip 74pt d,a,b,c\hskip 38pt d,-a,-b,-c$
\par\tabb\tabb\tab $\hskip 69.5pt o,p,q,r,s\hskip 44pt s,p,r,q,o$

\medskip{\bf B3\tab Parameter space}

\medskip\tabb For any packing $\<u,v,w\>$

\medskip\tabb\tabb $R_a^{} = R_a^{-1}\,:\,\<x,y,z\>\,\rightarrow\,d-\<x,y,z\>+a$\tabb inversion about ${1\over 2}(d+a)$

\medskip\tabb\tabb\tab $R_a^{}\,:\,{\rm packing}\,\<u,v,w\>\,\rightarrow\,{\rm packing}\,\<u,v,w\>$
\par\tabb\tabb\tab $\hskip 35pt d,a,b,c\hskip 34pt a,d,d+a-b,d-c+a$

\medskip\tabb\tabb $R_b^{} = R_b^{-1}\,:\,\<x,y,z\>\,\rightarrow\,d-\<x,y,z\>+b$\tabb inversion about ${1\over 2}(d+b)$

\medskip\tabb\tabb\tab $R_b^{}\,:\,{\rm packing}\,\<u,v,w\>\,\rightarrow\,{\rm packing}\,\<u,v,w\>$
\par\tabb\tabb\tab $\hskip 35pt d,b,c,a\hskip 34pt b,d,d+b-c,d-a+b$

\medskip\tabb\tabb $R_c^{} = R_c^{-1}\,:\,\<x,y,z\>\,\rightarrow\,d-\<x,y,z\>+c$\tabb inversion about ${1\over 2}(d+c)$

\medskip\tabb\tabb\tab $R_c^{}\,:\,{\rm packing}\,\<u,v,w\>\,\rightarrow\,{\rm packing}\,\<u,v,w\>$
\par\tabb\tabb\tab $\hskip 35pt d,c,a,b\hskip 34pt c,d,d+c-a,d-b+c$

\medskip\tabb For any 2 packings $\<u,v,w\>$, $\<\tilde u,\tilde v,\tilde w\>$ related by

\medskip\tabb\tabb\tab $\<\tilde u,\tilde v,-{1\over 2}\tilde u+\tilde v+\tilde w\> = \<-u,-v,-{1\over 2}u+v+w\>$

\medskip\tabb\tabb $T_{abc}^{} = {1\over 3}\left[\scriptsize\matrixold{+1&-2&-2\cr -2&-2&+1\cr -2&+1&-2}\right]$\tabb rotation by $\pm\pi$ about $p = \<+2,-1,-1\>$

\medskip\tabb\tabb\tab $T_{abc}^{}\,:\,{\rm packing}\,\<u,v,w\>\,\rightarrow\,{\rm packing}\,\<\tilde u,\tilde v,\tilde w\>$
\par\tabb\tabb\tab $\hskip 41pt d,a,b,c\hskip 37pt d,-a,-b,-c$
\par\tabb\tabb\tab $\hskip 36.5pt o,p,q,r,s\hskip 43pt s,p,r,q,o$

\section{Computations}

\medskip{\bf C0$_{\rm vec}^{}$\tab Vector space $\<x,y,z\>$}

\medskip\tabb The cluster vertices are

\medskip\tabb\tabb $o = \<+2,+2,+2\>$
\par\tabb\tabb $p = \<+2,-1,-1\>$
\par\tabb\tabb $q = \<-1,+2,-1\>$
\par\tabb\tabb $r = \<-1,-1,+2\>$
\par\tabb\tabb $s = \<-2,-2,-2\>$

\medskip\tabb The cluster volume is

\medskip\tabb\tabb $U = {1\over 6}\det[o-p,o-q,o-r]+{1\over 6}\det[s-p,s-q,s-r] = 18$

\medskip\tabb The lattice vectors are

\medskip\tabb\tabb $a = \<a_x^{},a_y^{},a_z^{}\>$
\par\tabb\tabb $b = \<b_x^{},b_y^{},b_z^{}\>$
\par\tabb\tabb $c = \<c_x^{},c_y^{},c_z^{}\>$
\par\tabb\tabb $d = \<d_x^{},d_y^{},d_z^{}\>$

\medskip\tabb The lattice volume and packing density are

\medskip\tabb\tabb $V = 2\det[a,b,c] = \det[a+b,b+c,c+a]$
\par\tabb\tabb $\phi = 2U/V = 36/V$

\medskip{\bf C0$_{\rm pac}^{}$\tab Packing parameter space $\<u,v,w\>$}

\medskip\tabb The linear incidence conditions are

\medskip\tabb\tabb ${\bf G}_a^\pm,{\bf G}_b^\pm,{\bf G}_c^\pm,{\bf G}_{abc}^\pm,{\bf G}_{a+b}^{}$
\par\tabb\tabb $\<a_x^{},b_y^{},c_z^{}\> = \<u+{27\over 10},v+{51\over 20},w+{753\over 320}\>$

\medskip\tabb The lattice vectors are

\medskip\tabb\tabb $a = \<{27\over 10}+u,{21\over 20}-v,-{3\over 20}+2u+v\>$
\par\tabb\tabb $b = \<-{3\over 10}-u,{51\over 20}+v,{27\over 20}-2u-v\>$
\par\tabb\tabb $c = \<{129\over 160}-2u+4v+2w,-{237\over 320}-u+2v+3w,{753\over 320}+w\>$
\par\tabb\tabb $d = \<{1\over 10}+u,-{1\over 20}+u+v,-{1\over 20}+u-v\>$

\medskip\tabb The lattice volume and packing density are

\medskip\tabb\tabb $V = 2\det[a,b,c] = {9\over 25}(117+60u_{}^2-80uv-80v_{}^2)$
\par\tabb\tabb $\phi = 36/V = 100/(117+60u_{}^2-80uv-80v_{}^2)$

\medskip{\bf C1$_{\partial V}^{}$/C1$_{\rm opt}^{}$\tab Optimal plane $u = {2\over 5}v$}

\medskip\tabb The lattice volume function is a rotated parabolic hyperboloid

\medskip\tabb\tabb $V = {9\over 25}(117+60u_{}^2-80uv-80v_{}^2)$

\eject\tabb The contours (of constant $V$) are rotated hyperbolas

\medskip\tabb\tabb ${3\over 5}u_{}^2-{4\over 5}uv-{4\over 5}v_{}^2 = {1\over 36}V-{117\over 100}$

\medskip\tabb The asymptotes (of the hyperbolas) are

\medskip\tabb\tabb $u = +2v$
\par\tabb\tabb $u = -{2\over 3}v$

\medskip\tabb The 2 maximal lines have slope $du/dv = -6$

\medskip\tabb\tabb $\<+{3\over 160},+{3\over 64},-{3\over 80}\>+u\<1,-{1\over 6},+{5\over 6}\>$
\par\tabb\tabb $\<-{3\over 160},-{3\over 64},+{3\over 80}\>+u\<1,-{1\over 6},+{1\over 2}\>$

\medskip\tabb The contours have slope $du/dv = -6$ when

\medskip\tabb\tabb $u = +{2\over 5}v$

\medskip{\bf C1$_{a-b}^{}$/C1$_{\rm cen}^{}$\tab Central plane $u = 0$}

\medskip\tabb The linear incidence conditions are

\medskip\tabb\tabb ${\bf G}_a^\pm,{\bf G}_b^\pm,{\bf G}_c^\pm,{\bf G}_{abc}^\pm,{\bf G}_{a+b}^{},{\bf H}_{a-b}^{}$
\par\tabb\tabb $\<b_y^{},c_z^{}\> = \<v+{51\over 20},w+{753\over 320}\>$

\medskip\tabb The lattice vectors are

\medskip\tabb\tabb $a = \<{27\over 10},{21\over 20}-v,-{3\over 20}+v\>$
\par\tabb\tabb $b = \<-{3\over 10},{51\over 20}+v,{27\over 20}-v\>$
\par\tabb\tabb $c = \<{129\over 160}+4v+2w,-{237\over 320}+2v+3w,{753\over 320}+w\>$
\par\tabb\tabb $d = \<{1\over 10},-{1\over 20}+v,-{1\over 20}-v\>$

\medskip\tabb The lattice volume and packing density are

\medskip\tabb\tabb $V = {9\over 25}(117-80v_{}^2)$
\par\tabb\tabb $\phi = 100/(117-80v_{}^2)$

\medskip{\bf C1$_{b+c}^{}$\tab Boundary plane $+u+v-w = {33\over 320}$}

\medskip\tabb The linear incidence conditions are

\medskip\tabb\tabb ${\bf G}_a^\pm,{\bf G}_b^\pm,{\bf G}_c^\pm,{\bf G}_{abc}^\pm,{\bf G}_{a+b}^{},{\bf H}_{b+c}^{}$
\par\tabb\tabb $\<a_x^{},b_y^{}\> = \<u+{27\over 10},v+{51\over 20}\>$

\medskip\tabb The lattice vectors are

\medskip\tabb\tabb $a = \<{27\over 10}+u,{21\over 20}-v,-{3\over 20}+2u+v\>$
\par\tabb\tabb $b = \<-{3\over 10}-u,{51\over 20}+v,{27\over 20}-2u-v\>$
\par\tabb\tabb $c = \<{3\over 5}+6v,-{21\over 20}+2u+5v,{9\over 4}+u+v\>$
\par\tabb\tabb $d = \<{1\over 10}+u,-{1\over 20}+u+v,-{1\over 20}+u-v\>$

\medskip\tabb The lattice volume and packing density are

\medskip\tabb\tabb $V = {9\over 25}(117+60u_{}^2-80uv-80v_{}^2)$
\par\tabb\tabb $\phi = 100/(117+60u_{}^2-80uv-80v_{}^2)$

\eject{\bf C1$_{c+a}^{}$\tab Boundary plane $-3v-w = {33\over 320}$}

\medskip\tabb The linear incidence conditions are

\medskip\tabb\tabb ${\bf G}_a^\pm,{\bf G}_b^\pm,{\bf G}_c^\pm,{\bf G}_{abc}^\pm,{\bf G}_{a+b}^{},{\bf H}_{c+a}^{}$
\par\tabb\tabb $\<a_x^{},b_y^{}\> = \<u+{27\over 10},v+{51\over 20}\>$

\medskip\tabb The lattice vectors are

\medskip\tabb\tabb $a = \<{27\over 10}+u,{21\over 20}-v,-{3\over 20}+2u+v\>$
\par\tabb\tabb $b = \<-{3\over 10}-u,{51\over 20}+v,{27\over 20}-2u-v\>$
\par\tabb\tabb $c = \<{3\over 5}-2u-2v,-{21\over 20}-u-7v,{9\over 4}-3v\>$
\par\tabb\tabb $d = \<{1\over 10}+u,-{1\over 20}+u+v,-{1\over 20}+u-v\>$

\medskip\tabb The lattice volume and packing density are

\medskip\tabb\tabb $V = {9\over 25}(117+60u_{}^2-80uv-80v_{}^2)$
\par\tabb\tabb $\phi = 100/(117+60u_{}^2-80uv-80v_{}^2)$

\medskip{\bf C1$_{b-c}^{}$\tab Boundary plane $-u+2w = {3\over 32}$}

\medskip\tabb The linear incidence conditions are

\medskip\tabb\tabb ${\bf G}_a^\pm,{\bf G}_b^\pm,{\bf G}_c^\pm,{\bf G}_{abc}^\pm,{\bf G}_{a+b}^{},{\bf H}_{b-c}^{}$
\par\tabb\tabb $\<a_x^{},b_y^{}\> = \<u+{27\over 10},v+{51\over 20}\>$

\medskip\tabb The lattice vectors are

\medskip\tabb\tabb $a = \<{27\over 10}+u,{21\over 20}-v,-{3\over 20}+2u+v\>$
\par\tabb\tabb $b = \<-{3\over 10}-u,{51\over 20}+v,{27\over 20}-2u-v\>$
\par\tabb\tabb $c = \<{9\over 10}-u+4v,-{3\over 5}+{1\over 2}u+2v,{12\over 5}+{1\over 2}u\>$
\par\tabb\tabb $d = \<{1\over 10}+u,-{1\over 20}+u+v,-{1\over 20}+u-v\>$

\medskip\tabb The lattice volume and packing density are

\medskip\tabb\tabb $V = {9\over 25}(117+60u_{}^2-80uv-80v_{}^2)$
\par\tabb\tabb $\phi = 100/(117+60u_{}^2-80uv-80v_{}^2)$

\medskip{\bf C1$_{c-a}^{}$\tab Boundary plane $-u+4v+2w = {3\over 32}$}

\medskip\tabb The linear incidence conditions are

\medskip\tabb\tabb ${\bf G}_a^\pm,{\bf G}_b^\pm,{\bf G}_c^\pm,{\bf G}_{abc}^\pm,{\bf G}_{a+b}^{},{\bf H}_{c-a}^{}$
\par\tabb\tabb $\<a_x^{},b_y^{}\> = \<u+{27\over 10},v+{51\over 20}\>$

\medskip\tabb The lattice vectors are

\medskip\tabb\tabb $a = \<{27\over 10}+u,{21\over 20}-v,-{3\over 20}+2u+v\>$
\par\tabb\tabb $b = \<-{3\over 10}-u,{51\over 20}+v,{27\over 20}-2u-v\>$
\par\tabb\tabb $c = \<{9\over 10}-u,-{3\over 5}+{1\over 2}u-4v,{12\over 5}+{1\over 2}u-2v\>$
\par\tabb\tabb $d = \<{1\over 10}+u,-{1\over 20}+u+v,-{1\over 20}+u-v\>$

\medskip\tabb The lattice volume and packing density are

\medskip\tabb\tabb $V = {9\over 25}(117+60u_{}^2-80uv-80v_{}^2)$
\par\tabb\tabb $\phi = 100/(117+60u_{}^2-80uv-80v_{}^2)$

\eject{\bf C2$_{\rm sym}^{}$\tab Symmetric line $\<0,0,0\>+w\<0,0,1\>$}

\medskip\tabb The linear incidence conditions are

\medskip\tabb\tabb ${\bf G}_a^\pm,{\bf G}_b^\pm,{\bf G}_c^\pm,{\bf G}_{abc}^\pm,{\bf G}_{a+b}^{},{\bf H}_{a-b}^{},\partial V$
\par\tabb\tabb $\<b_y^{},c_z^{}\> = \<v+{51\over 20},w+{753\over 320}\>,v = 0$ [min $V$]

\medskip\tabb The lattice vectors are

\medskip\tabb\tabb $a = \<{27\over 10},{21\over 20},-{3\over 20}\>$
\par\tabb\tabb $b = \<-{3\over 10},{51\over 20},{27\over 20}\>$
\par\tabb\tabb $c = \<{129\over 160}+2w,-{237\over 320}+3w,{753\over 320}+w\>$
\par\tabb\tabb $d = \<{1\over 10},-{1\over 20},-{1\over 20}\>$

\medskip\tabb The lattice volume and packing density are

\medskip\tabb\tabb $V = {1053\over 25}$
\par\tabb\tabb $\phi = {100\over 117}$

\medskip{\bf C2$_{\rm min}^+$\tab Minimal line $\<0,0,+{3\over 64}\>+u\<1,0,+{1\over 2}\>$}

\medskip\tabb The linear incidence conditions are

\medskip\tabb\tabb ${\bf G}_a^\pm,{\bf G}_b^\pm,{\bf G}_c^\pm,{\bf G}_{abc}^\pm,{\bf G}_{a+b}^{},{\bf H}_{b-c}^{},{\bf H}_{c-a}^{}$
\par\tabb\tabb $a_x^{} = u+{27\over 10}$

\medskip\tabb The lattice vectors are

\medskip\tabb\tabb $a = \<{27\over 10}+u,{21\over 20},-{3\over 20}+2u\>$
\par\tabb\tabb $b = \<-{3\over 10}-u,{51\over 20},{27\over 20}-2u\>$
\par\tabb\tabb $c = \<{9\over 10}-u,-{3\over 5}+{1\over 2}u,{12\over 5}+{1\over 2}u\>$
\par\tabb\tabb $d = \<{1\over 10}+u,-{1\over 20}+u,-{1\over 20}+u\>$

\medskip\tabb The lattice volume and packing density are

\medskip\tabb\tabb $V = {9\over 25}(117+60u_{}^2)$
\par\tabb\tabb $\phi = 100/(117+60u_{}^2)$

\medskip{\bf C2$_{\rm min}^-$\tab Minimal line $\<0,0,-{33\over 320}\>+u\<1,-{1\over 4},+{3\over 4}\>$}

\medskip\tabb The linear incidence conditions are

\medskip\tabb\tabb ${\bf G}_a^\pm,{\bf G}_b^\pm,{\bf G}_c^\pm,{\bf G}_{abc}^\pm,{\bf G}_{a+b}^{},{\bf H}_{b+c}^{},{\bf H}_{c+a}^{}$
\par\tabb\tabb $a_x^{} = u+{27\over 10}$

\medskip\tabb The lattice vectors are

\medskip\tabb\tabb $a = \<{27\over 10}+u,{21\over 20}+{1\over 4}u,-{3\over 20}+{7\over 4}u\>$
\par\tabb\tabb $b = \<-{3\over 10}-u,{51\over 20}-{1\over 4}u,{27\over 20}-{7\over 4}u\>$
\par\tabb\tabb $c = \<{3\over 5}-{3\over 2}u,-{21\over 20}+{3\over 4}u,{9\over 4}+{3\over 4}u\>$
\par\tabb\tabb $d = \<{1\over 10}+u,-{1\over 20}+{3\over 4}u,-{1\over 20}+{5\over 4}u\>$

\medskip\tabb The lattice volume and packing density are

\medskip\tabb\tabb $V = {9\over 25}(117+75u_{}^2)$
\par\tabb\tabb $\phi = 100/(117+75u_{}^2)$

\eject{\bf C2$_{\rm max}^+$\tab Maximal line $\<0,+{1\over 20},-{17\over 320}\>+u\<1,-{1\over 6},+{5\over 6}\>$}

\medskip\tabb The linear incidence conditions are

\medskip\tabb\tabb ${\bf G}_a^\pm,{\bf G}_b^\pm,{\bf G}_c^\pm,{\bf G}_{abc}^\pm,{\bf G}_{a+b}^{},{\bf H}_{b+c}^{},{\bf H}_{c-a}^{}$
\par\tabb\tabb $a_x^{} = u+{27\over 10}$

\medskip\tabb The lattice vectors are

\medskip\tabb\tabb $a = \<{27\over 10}+u,1+{1\over 6}u,-{1\over 10}+{11\over 6}u\>$
\par\tabb\tabb $b = \<-{3\over 10}-u,{13\over 5}-{1\over 6}u,{13\over 10}-{11\over 6}u\>$
\par\tabb\tabb $c = \<{9\over 10}-u,-{4\over 5}+{7\over 6}u,{23\over 10}+{5\over 6}u\>$
\par\tabb\tabb $d = \<{1\over 10}+u,0+{5\over 6}u,-{1\over 10}+{7\over 6}u\>$

\medskip\tabb The lattice volume and packing density are

\medskip\tabb\tabb $V = {9\over 1000}(4671+{25600\over 9}(u-{3\over 160})_{}^2)$
\par\tabb\tabb $\phi = 4000/(4671+{25600\over 9}(u-{3\over 160})_{}^2)$

\medskip{\bf C2$_{\rm max}^-$\tab Maximal line $\<0,-{1\over 20},+{3\over 64}\>+u\<1,-{1\over 6},+{1\over 2}\>$}

\medskip\tabb The linear incidence conditions are

\medskip\tabb\tabb ${\bf G}_a^\pm,{\bf G}_b^\pm,{\bf G}_c^\pm,{\bf G}_{abc}^\pm,{\bf G}_{a+b}^{},{\bf H}_{b-c}^{},{\bf H}_{c+a}^{}$
\par\tabb\tabb $a_x^{} = u+{27\over 10}$

\medskip\tabb The lattice vectors are

\medskip\tabb\tabb $a = \<{27\over 10}+u,{11\over 10}+{1\over 6}u,-{1\over 5}+{11\over 6}u\>$
\par\tabb\tabb $b = \<-{3\over 10}-u,{5\over 2}-{1\over 6}u,{7\over 5}-{11\over 6}u\>$
\par\tabb\tabb $c = \<{7\over 10}-{5\over 3}u,-{7\over 10}+{1\over 6}u,{12\over 5}+{1\over 2}u\>$
\par\tabb\tabb $d = \<{1\over 10}+u,-{1\over 10}+{5\over 6}u,0+{7\over 6}u\>$

\medskip\tabb The lattice volume and packing density are

\medskip\tabb\tabb $V = {9\over 1000}(4671+{25600\over 9}(u+{3\over 160})_{}^2)$
\par\tabb\tabb $\phi = 4000/(4671+{25600\over 9}(u+{3\over 160})_{}^2)$

\medskip{\bf C3$_{\rm orig}^{}$\tab Origin $\<0,0,0\>$}

\medskip\tabb The linear incidence conditions are

\medskip\tabb\tabb ${\bf G}_a^\pm,{\bf G}_b^\pm,{\bf G}_c^\pm,{\bf G}_{abc}^\pm,{\bf G}_{a+b}^{}$
\par\tabb\tabb $\<a_x^{},b_y^{},c_z^{}\> = \<u+{27\over 10},v+{51\over 20},w+{753\over 320}\>,\<u,v,w\> = \<0,0,0\>$

\medskip\tabb The lattice vectors are

\medskip\tabb\tabb $a = \<+{27\over 10},+{21\over 20},-{3\over 20}\>$
\par\tabb\tabb $b = \<-{3\over 10},+{51\over 20},+{27\over 20}\>$
\par\tabb\tabb $c = \<+{129\over 160},-{237\over 320},+{753\over 320}\>$
\par\tabb\tabb $d = \<+{1\over 10},-{1\over 20},-{1\over 20}\>$

\medskip\tabb The lattice volume and packing density are

\medskip\tabb\tabb $V = 2\det[a,b,c] = {1053\over 25}$
\par\tabb\tabb $\phi = 36/V = {100\over 117}$

\eject{\bf C3$_{\rm opt}^+$\tab Optimal point $\<+{3\over 160},+{3\over 64},-{3\over 80}\>$}

\medskip\tabb The linear incidence conditions are

\medskip\tabb\tabb ${\bf G}_a^\pm,{\bf G}_b^\pm,{\bf G}_c^\pm,{\bf G}_{abc}^\pm,{\bf G}_{a+b}^{},{\bf H}_{b+c}^{},{\bf H}_{c-a}^{},\partial V$
\par\tabb\tabb $a_x^{} = u+{27\over 10},u = +{3\over 160}$ [min $V$]

\medskip\tabb The lattice vectors are

\medskip\tabb\tabb $a = \<+{87\over 32},+{321\over 320},-{21\over 320}\>
= \<+2.71875,+1.003125,-0.065625\>$
\par\tabb\tabb $b = \<-{51\over 160},+{831\over 320},+{81\over 64}\>
= \<-0.31875,+2.596875,+1.265625\>$
\par\tabb\tabb $c = \<+{141\over 160},-{249\over 320},+{741\over 320}\>
= \<+0.88125,-0.778125,+2.315625\>$
\par\tabb\tabb $d = \<+{19\over 160},+{1\over 64},-{5\over 64}\>
= \<+0.11875,+0.015625,-0.078125\>$

\medskip\tabb The lattice volume and packing density are

\medskip\tabb\tabb $V = {42039\over 1000} = 42.039$
\par\tabb\tabb $\phi = {4000\over 4671}\approx .856347677156$

\medskip{\bf C3$_{\rm opt}^-$\tab Optimal point $\<-{3\over 160},-{3\over 64},+{3\over 80}\>$}

\medskip\tabb The linear incidence conditions are

\medskip\tabb\tabb ${\bf G}_a^\pm,{\bf G}_b^\pm,{\bf G}_c^\pm,{\bf G}_{abc}^\pm,{\bf G}_{a+b}^{},{\bf H}_{b-c}^{},{\bf H}_{c+a}^{},\partial V$
\par\tabb\tabb $a_x^{} = u+{27\over 10},u = -{3\over 160}$ [min $V$]

\medskip\tabb The lattice vectors are

\medskip\tabb\tabb $a = \<+{429\over 160},+{351\over 320},-{15\over 64}\>
= \<+2.68125,+1.096875,-0.234375\>$
\par\tabb\tabb $b = \<-{9\over 32},+{801\over 320},+{459\over 320}\>
= \<-0.28125,+2.503125,+1.434375\>$
\par\tabb\tabb $c = \<+{117\over 160},-{45\over 64},+{153\over 64}\>
= \<+0.73125,-0.703125,+2.390625\>$
\par\tabb\tabb $d = \<+{13\over 160},-{37\over 320},-{7\over 320}\>
= \<+0.08125,-0.115625,-0.021875\>$

\medskip\tabb The lattice volume and packing density are

\medskip\tabb\tabb $V = {42039\over 1000} = 42.039$
\par\tabb\tabb $\phi = {4000\over 4671}\approx .856347677156$

\medskip{\bf C3$_{\rm cen}^+$\tab Central point $\<0,+{1\over 20},-{17\over 320}\>$}

\medskip\tabb The linear incidence conditions are

\medskip\tabb\tabb ${\bf G}_a^\pm,{\bf G}_b^\pm,{\bf G}_c^\pm,{\bf G}_{abc}^\pm,{\bf G}_{a+b}^{},{\bf H}_{b+c}^{},{\bf H}_{c-a}^{},{\bf H}_{a-b}^{}$

\medskip\tabb The lattice vectors are

\medskip\tabb\tabb $a = \<+{27\over 10},+1,-{1\over 10}\>$
\par\tabb\tabb $b = \<-{3\over 10},+{13\over 5},+{13\over 10}\>$
\par\tabb\tabb $c = \<+{9\over 10},-{4\over 5},+{23\over 10}\>$
\par\tabb\tabb $d = \<+{1\over 10},0,-{1\over 10}\>$

\medskip\tabb The lattice volume and packing density are

\medskip\tabb\tabb $V = {5256\over 125}$
\par\tabb\tabb $\phi = {125\over 146}$

\eject{\bf C3$_{\rm cen}^-$\tab Central point $\<0,-{1\over 20},+{3\over 64}\>$}

\medskip\tabb The linear incidence conditions are

\medskip\tabb\tabb ${\bf G}_a^\pm,{\bf G}_b^\pm,{\bf G}_c^\pm,{\bf G}_{abc}^\pm,{\bf G}_{a+b}^{},{\bf H}_{b-c}^{},{\bf H}_{c+a}^{},{\bf H}_{a-b}^{}$

\medskip\tabb The lattice vectors are

\medskip\tabb\tabb $a = \<+{27\over 10},+{11\over 10},-{1\over 5}\>$
\par\tabb\tabb $b = \<-{3\over 10},+{5\over 2},+{7\over 5}\>$
\par\tabb\tabb $c = \<+{7\over 10},-{7\over 10},+{12\over 5}\>$
\par\tabb\tabb $d = \<+{1\over 10},-{1\over 10},0\>$

\medskip\tabb The lattice volume and packing density are

\medskip\tabb\tabb $V = {5256\over 125}$
\par\tabb\tabb $\phi = {125\over 146}$

\medskip{\bf C3$_{\rm sym}^+$\tab Symmetric point $\<0,0,+{3\over 64}\>$}

\medskip\tabb The linear incidence conditions are

\medskip\tabb\tabb ${\bf G}_a^\pm,{\bf G}_b^\pm,{\bf G}_c^\pm,{\bf G}_{abc}^\pm,{\bf G}_{a+b}^{},{\bf H}_{b-c}^{},{\bf H}_{c-a}^{},{\bf H}_{a-b}^{},\partial V$

\medskip\tabb The lattice vectors are

\medskip\tabb\tabb $a = \<+{27\over 10},+{21\over 20},-{3\over 20}\>$
\par\tabb\tabb $b = \<-{3\over 10},+{51\over 20},+{27\over 20}\>$
\par\tabb\tabb $c = \<+{9\over 10},-{3\over 5},+{12\over 5}\>$
\par\tabb\tabb $d = \<+{1\over 10},-{1\over 20},-{1\over 20}\>$

\medskip\tabb The lattice volume and packing density are

\medskip\tabb\tabb $V = {1053\over 25}$
\par\tabb\tabb $\phi = {100\over 117}$

\medskip{\bf C3$_{\rm sym}^-$\tab Symmetric point $\<0,0,-{33\over 320}\>$}

\medskip\tabb The linear incidence conditions are

\medskip\tabb\tabb ${\bf G}_a^\pm,{\bf G}_b^\pm,{\bf G}_c^\pm,{\bf G}_{abc}^\pm,{\bf G}_{a+b}^{},{\bf H}_{b+c}^{},{\bf H}_{c+a}^{},{\bf H}_{a-b}^{},\partial V$

\medskip\tabb The lattice vectors are

\medskip\tabb\tabb $a = \<+{27\over 10},+{21\over 20},-{3\over 20}\>$
\par\tabb\tabb $b = \<-{3\over 10},+{51\over 20},+{27\over 20}\>$
\par\tabb\tabb $c = \<+{3\over 5},-{21\over 20},+{9\over 4}\>$
\par\tabb\tabb $d = \<+{1\over 10},-{1\over 20},-{1\over 20}\>$

\medskip\tabb The lattice volume and packing density are

\medskip\tabb\tabb $V = {1053\over 25}$
\par\tabb\tabb $\phi = {100\over 117}$

\section{Small unit cell packings}

\medskip{\bf D1}\tab The clusters are

\medskip\tabb\tabb $+{\bf B}_1^{} = +B[o,p,q,r]$

\medskip\tabb The cluster vertices are

\medskip\tabb\tabb $o = \<+1,+1,+1\>$
\par\tabb\tabb $p = \<+1,-1,-1\>$
\par\tabb\tabb $q = \<-1,+1,-1\>$
\par\tabb\tabb $r = \<-1,-1,+1\>$

\medskip\tabb The neighbors of ${\bf B}_1^{}$ are

\medskip\tabb\tabb ${\bf B}_1^{}+a\tab{\bf B}_1^{}+b\tab{\bf B}_1^{}+c$
\par\tabb\tabb ${\bf B}_1^{}-a\tab{\bf B}_1^{}-b\tab{\bf B}_1^{}-c$
\par\tabb\tabb ${\bf B}_1^{}+a+b\tab{\bf B}_1^{}+b+c\tab{\bf B}_1^{}+c+a$
\par\tabb\tabb ${\bf B}_1^{}-a-b\tab{\bf B}_1^{}-b-c\tab{\bf B}_1^{}-c-a$
\par\tabb\tabb ${\bf B}_1^{}+a+b+c$
\par\tabb\tabb ${\bf B}_1^{}-a-b-c$

\medskip\tabb The linear incidence conditions are

\medskip\tabb\tabb $E[o,p]\cap(E[q,r]+a)\neq\emptyset$
\par\tabb\tabb $E[o,q]\cap(E[r,p]+b)\neq\emptyset$
\par\tabb\tabb $E[o,r]\cap(E[p,q]+c)\neq\emptyset$
\par\tabb\tabb $F[o,p,q]\cap(V[r]+a+b)\neq\emptyset$
\par\tabb\tabb $F[o,q,r]\cap(V[p]+b+c)\neq\emptyset$
\par\tabb\tabb $F[o,r,p]\cap(V[q]+c+a)\neq\emptyset$
\par\tabb\tabb $V[o]\cap(F[p,q,r]+a+b+c)\neq\emptyset$

\medskip\tabb\tabb\tabb optimize (3 free parameters)

\medskip\tabb The lattice vectors are

\medskip\tabb\tabb $a = \<+2,-{1\over 3},-{1\over 3}\>$
\par\tabb\tabb $b = \<-{1\over 3},+2,-{1\over 3}\>$
\par\tabb\tabb $c = \<-{1\over 3},-{1\over 3},+2\>$

\medskip\tabb The lattice volume and packing density are

\medskip\tabb\tabb $V = \det[a,b,c] = {196\over 27}$
\par\tabb\tabb $\phi = 1\cdot{8\over 3}/V = {18\over 49}$

\bigskip{\bf D2}\tab The clusters are

\medskip\tabb\tabb $+{\bf B}_1^{} = +B[o,p,q,r]$
\par\tabb\tabb $-{\bf B}_1^{} = -B[o,p,q,r]$

\medskip\tabb The cluster vertices are

\medskip\tabb\tabb $o = \<+1,+1,+1\>$
\par\tabb\tabb $p = \<+1,-1,-1\>$
\par\tabb\tabb $q = \<-1,+1,-1\>$
\par\tabb\tabb $r = \<-1,-1,+1\>$

\eject\tabb The neighbors of ${\bf B}_1^{}$ are

\medskip\tabb\tabb ${\bf B}_1^{}+a+b\tab{\bf B}_1^{}+b+c\tab{\bf B}_1^{}+c+a$
\par\tabb\tabb ${\bf B}_1^{}-a-b\tab{\bf B}_1^{}-b-c\tab{\bf B}_1^{}-c-a$
\par\tabb\tabb ${\bf B}_1^{}+2a+b+c\tab{\bf B}_1^{}+2b+c+a\tab{\bf B}_1^{}+2c+a+b$
\par\tabb\tabb ${\bf B}_1^{}-2a-b-c\tab{\bf B}_1^{}-2b-c-a\tab{\bf B}_1^{}-2c-a-b$
\par\tabb\tabb ${\bf B}_1^{}+2a+2b+2c$
\par\tabb\tabb ${\bf B}_1^{}-2a-2b-2c$

\medskip\tabb\tabb $d-{\bf B}_1^{}+a\tab d-{\bf B}_1^{}+b\tab d-{\bf B}_1^{}+c$
\par\tabb\tabb $d-{\bf B}_1^{}-a\tab d-{\bf B}_1^{}-b\tab d-{\bf B}_1^{}-c$
\par\tabb\tabb $d-{\bf B}_1^{}+a+b+c$
\par\tabb\tabb $d-{\bf B}_1^{}-a-b-c$

\medskip\tabb The linear incidence conditions are

\medskip\tabb\tabb $E[o,r]\cap(E[p,q]+a+b)\neq\emptyset$
\par\tabb\tabb $E[o,p]\cap(E[q,r]+b+c)\neq\emptyset$
\par\tabb\tabb $E[o,q]\cap(E[r,p]+c+a)\neq\emptyset$
\par\tabb\tabb $F[o,q,r]\cap(V[p]+2a+b+c)\neq\emptyset$
\par\tabb\tabb $E[o,r,p]\cap(V[q]+2b+c+a)\neq\emptyset$
\par\tabb\tabb $F[o,p,q]\cap(V[r]+2c+a+b)\neq\emptyset$
\par\tabb\tabb $V[o]\cap(F[p,q,r]+2a+2b+2c)\neq\emptyset$

\medskip\tabb\tabb $F[o,q,r]\cap(d-F[o,q,r]+a)\neq\emptyset$
\par\tabb\tabb $F[o,p,q]\cap(d-F[o,p,q]-b)\neq\emptyset$
\par\tabb\tabb $F[o,r,p]\cap(d-F[o,r,p]-c)\neq\emptyset$
\par\tabb\tabb $F[o,p,q]\cap(d-F[o,p,q]+a+b+c)\neq\emptyset$
\par\tabb\tabb $F[p,q,r]\cap(d-F[p,q,r]-a-b-c)\neq\emptyset$

\medskip\tabb\tabb\tabb optimize (2 free parameters)

\medskip\tabb The lattice vectors are

\medskip\tabb\tabb $a = {1\over 3}\<-7+\sqrt{10},-19+7\sqrt{10},+16-4\sqrt{10}\>$
\par\tabb\tabb $b = {1\over 3}\<+7-1\sqrt{10},-7+1\sqrt{10},-10+4\sqrt{10}\>$
\par\tabb\tabb $c = {1\over 3}\<-1+1\sqrt{10},+25-7\sqrt{10},+2-2\sqrt{10}\>$
\par\tabb\tabb $d = {1\over 3}\<-1+1\sqrt{10},+9-3\sqrt{10},-8+2\sqrt{10}\>$

\medskip\tabb The lattice volume and packing density are

\medskip\tabb\tabb $V = 2\det[a,b,c] = \det[a+b,b+c,c+a] = {16\over 27}(139-40\sqrt{10})$
\par\tabb\tabb $\phi = 2\cdot{8\over 3}/V = 9/(139-40\sqrt{10})$

\bigskip{\bf D3}\tab The clusters are

\medskip\tabb\tabb $+{\bf B}_1^{} = +B[o,p,q,r]$
\par\tabb\tabb $-{\bf B}_1^{} = -B[o,p,q,r]$
\par\tabb\tabb $T{\bf B}_1^{} = TB[o,p,q,r]$

\eject\tabb The cluster vertices are

\medskip\tabb\tabb $o = \<+1,+1,+1\>$
\par\tabb\tabb $p = \<+1,-1,-1\>$
\par\tabb\tabb $q = \<-1,+1,-1\>$
\par\tabb\tabb $r = \<-1,-1,+1\>$

\medskip\tabb\tabb\tabb $T = {1\over 3}\left[\scriptsize\matrixold{+1&-2&-2\cr -2&+1&-2\cr -2&-2&+1}\right]$\tabb reflection through the plane $+x+y+z = 0$

\medskip\tabb The neighbors of ${\bf B}_1^{}$ are

\medskip\tabb\tabb ${\bf B}_1^{}+a-b\tab{\bf B}_1^{}+b-c\tab{\bf B}_1^{}+c-a$
\par\tabb\tabb ${\bf B}_1^{}-a+b\tab{\bf B}_1^{}-b+c\tab{\bf B}_1^{}-c+a$
\par\tabb\tabb ${\bf B}_1^{}+2d$
\par\tabb\tabb ${\bf B}_1^{}-2d$

\medskip\tabb\tabb $-{\bf B}_1^{}-a\tab -{\bf B}_1^{}-a\tab -{\bf B}_1^{}-c$

\medskip\tabb\tabb $T{\bf B}_1^{}+d+a\tab T{\bf B}_1^{}+d+b\tab T{\bf B}_1^{}+d+c$
\par\tabb\tabb $T{\bf B}_1^{}-d+a\tab T{\bf B}_1^{}-d+b\tab T{\bf B}_1^{}-d+c$

\medskip\tabb The linear incidence conditions are

\medskip\tabb\tabb $E[r,p]\cap(V[q]+a-b)\neq\emptyset$
\par\tabb\tabb $E[p,q]\cap(V[r]+b-c)\neq\emptyset$
\par\tabb\tabb $E[q,r]\cap(V[p]+c-a)\neq\emptyset$
\par\tabb\tabb $V[o]\cap(F[p,q,r]+2d)\neq\emptyset$

\medskip\tabb\tabb $F[o,q,r]\cap(-F[o,q,r]-a)\neq\emptyset$
\par\tabb\tabb $F[o,r,p]\cap(-F[o,r,p]-b)\neq\emptyset$
\par\tabb\tabb $F[o,p,q]\cap(-F[o,p,q]-c)\neq\emptyset$

\medskip\tabb\tabb $E[o,p]\cap(TE[o,q]+d+a)\neq\emptyset$
\par\tabb\tabb $E[o,q]\cap(TE[o,r]+d+b)\neq\emptyset$
\par\tabb\tabb $E[o,r]\cap(TE[o,p]+d+c)\neq\emptyset$
\par\tabb\tabb $F[p,q,r]\cap(TF[p,q,r]-d+a)\neq\emptyset$
\par\tabb\tabb $F[p,q,r]\cap(TF[p,q,r]-d+b)\neq\emptyset$
\par\tabb\tabb $F[p,q,r]\cap(TF[p,q,r]-d+c)\neq\emptyset$

\medskip\tabb The lattice vectors are

\medskip\tabb\tabb $a = \<+1,-1,0\>$
\par\tabb\tabb $b = \<0,+1,-1\>$
\par\tabb\tabb $c = \<-1,0,+1\>$
\par\tabb\tabb $d = \<+{2\over 3},+{2\over 3},+{2\over 3}\>$

\medskip\tabb The lattice volume and packing density are

\medskip\tabb\tabb $V = 2\det[d+a,d+b,d+c] = 12$
\par\tabb\tabb $\phi = 3\cdot{8\over 3}/V = {2\over 3}$

\eject{\bf D5}\tab The clusters are

\medskip\tabb\tabb $+{\bf B}_5^{} = +({\bf F}_2^{}\cup{\bf B}_p^{}\cup{\bf B}_q^{}\cup{\bf B}_r^{})$

\medskip\tabb\tabb\tabb ${\bf F}_2^{} = B[o,p,q,r]\cup B[so,p,q,r]$
\par\tabb\tabb\tabb ${\bf B}_p^{} = d+T_p^{}B[sp,q,r,o]$
\par\tabb\tabb\tabb ${\bf B}_q^{} = e+T_q^{}B[sq,r,o,p]$
\par\tabb\tabb\tabb ${\bf B}_r^{} = f+T_r^{}B[sr,o,p,q]$

\medskip\tabb The cluster vertices are

\medskip\tabb\tabb $o = \<+1,+1,+1\>$
\par\tabb\tabb $p = \<+1,-1,-1\>$
\par\tabb\tabb $q = \<-1,+1,-1\>$
\par\tabb\tabb $r = \<-1,-1,+1\>$

\medskip\tabb\tabb\tabb $s = -{5\over 3}$

\medskip\tabb\tabb\tabb $T_{\<i,j,k\>}^\theta = (\cos\theta)\left[\scriptsize\matrixold{1&0&0\cr 0&1&0\cr 0&0&1}\right]+(\sin\theta)\left[\scriptsize\matrixold{0&-k&+\!j\cr +k&0&-i\cr -\!j&+i&0}\right]+(1-\cos\theta)\left[\scriptsize\matrixold{ii&ji&ki\cr i\!j&j\!j&k\!j\cr ik&jk&kk}\right]$
\par\tabb\tabb\tabb\tabb rotation by $\theta$ about the unit vector $\<i,j,k\>$

\medskip\tabb\tabb\tabb $T_p^{} = T_{\<-1,+1,+1\>/\sqrt 3}^\alpha$
\par\tabb\tabb\tabb $T_q^{} = T_{\<+1,-1,+1\>/\sqrt 3}^\beta$
\par\tabb\tabb\tabb $T_r^{} = T_{\<+1,+1,-1\>/\sqrt 3}^\gamma$

\medskip\tabb The neighbors of ${\bf B}_5^{}$ are

\medskip\tabb\tabb ${\bf B}_5^{}+a\tab{\bf B}_5^{}+b\tab{\bf B}_5^{}+c$
\par\tabb\tabb ${\bf B}_5^{}-a\tab{\bf B}_5^{}-b\tab{\bf B}_5^{}-c$
\par\tabb\tabb ${\bf B}_5^{}+a-b\tab{\bf B}_5^{}+b-c\tab{\bf B}_5^{}+c-a$
\par\tabb\tabb ${\bf B}_5^{}-a+b\tab{\bf B}_5^{}-b+c\tab{\bf B}_5^{}-c+a$

\medskip\tabb The linear incidence conditions are

\medskip\tabb\tabb $F[o,q,r]\cap(d+T_p^{}F[o,q,r])\neq\emptyset$
\par\tabb\tabb $F[o,r,p]\cap(e+T_q^{}F[o,r,p])\neq\emptyset$
\par\tabb\tabb $F[o,p,q]\cap(f+T_r^{}F[o,p,q])\neq\emptyset$

\medskip\tabb\tabb $(f+T_r^{}E[o,q])\cap(d+T_p^{}E[o,q])\neq\emptyset$
\par\tabb\tabb $(d+T_p^{}E[q,r])\cap(e+T_q^{}E[o,r])\neq\emptyset$
\par\tabb\tabb $(e+T_q^{}E[r,o])\cap(f+T_r^{}E[o,p])\neq\emptyset$

\medskip\tabb\tabb $E[o,p]\cap(E[so,q]+a)\neq\emptyset$
\par\tabb\tabb $(d+T_p^{}E[sp,q])\cap(E[so,r]+b)\neq\emptyset$
\par\tabb\tabb $(e+T_q^{}F[sq,o,p])\cap(V[so]+a)\neq\emptyset$
\par\tabb\tabb $(f+T_r^{}E[o,q])\cap(E[so,p]+b)\neq\emptyset$
\par\tabb\tabb $(d+T_p^{}E[sp,r])\cap(E[so,q]+c)\neq\emptyset$

\par\tabb\tabb $(e+T_q^{}E[sq,o])\cap(f+T_r^{}E[sr,p]+c)\neq\emptyset$
\par\tabb\tabb $(f+T_r^{}E[sr,o])\cap(e+T_q^{}E[sq,p]+b)\neq\emptyset$
\par\tabb\tabb $(d+T_p^{}E[r,o])\cap(f+T_r^{}E[sr,q]+c)\neq\emptyset$

\eject\tabb\tabb $(e+T_q^{}V[sq])\cap(d+T_p^{}F[sp,q,r]+a-b)\neq\emptyset$
\par\tabb\tabb $(f+T_r^{}E[sr,o])\cap(e+T_q^{}E[sq,p]+b-c)\neq\emptyset$
\par\tabb\tabb $(d+T_p^{}E[q,r])\cap(f+T_r^{}E[sr,o]+c-a)\neq\emptyset$
\par\tabb\tabb $F[so,q,r]\cap(f+T_r^{}V[sr]+c-a)\neq\emptyset$

\medskip\tabb\tabb\tabb optimize (3 free parameters)

\medskip\tabb The lattice vectors are

\medskip\tabb\tabb $a ≈ \<+2.59649957,+0.85417034,+1.06467162\>$
\par\tabb\tabb $b ≈ \<-0.14944338,+2.53196005,+0.58678219\>$
\par\tabb\tabb $c ≈ \<-0.02900844,-0.61508232,+2.49547175\>$
\par\tabb\tabb $d ≈ \<+0.02462012,-0.05626441,+0.08088453\>$
\par\tabb\tabb $e ≈ \<-0.03576085,-0.12004584,-0.08428499\>$
\par\tabb\tabb $f ≈ \<-0.04295985,+0.03620318,-0.00675668\>$

\medskip\tabb\tabb\tabb $\alpha ≈ +.07518716$
\par\tabb\tabb\tabb $\beta ≈ -.08591647$
\par\tabb\tabb\tabb $\gamma ≈ +.04543238$

\medskip\tabb The lattice volume and packing density are

\medskip\tabb\tabb $V = \det[a,b,c] ≈ 17.82301085$
\par\tabb\tabb $\phi = 5\cdot{8\over 3}/V ≈ .74809657$

\bigskip{\bf D6}\tab The clusters are

\medskip\tabb\tabb $+{\bf F}_2^{} = +(B[o,p,q,r]\cup B[p,q,r,s])$
\par\tabb\tabb $-{\bf F}_2^{} = -(B[o,p,q,r]\cup B[p,q,r,s])$
\par\tabb\tabb $-{\bf B}_1^{} = -B[o,p,q,r]$
\par\tabb\tabb $+{\bf B}_1^{} = +B[o,p,q,r]$

\medskip\tabb The cluster vertices are

\medskip\tabb\tabb $o = \<+1,+1,+1\>$
\par\tabb\tabb $p = \<+1,-1,-1\>$
\par\tabb\tabb $q = \<-1,+1,-1\>$
\par\tabb\tabb $r = \<-1,-1,+1\>$
\par\tabb\tabb $s = \<-{5\over 3},-{5\over 3},-{5\over 3}\>$

\medskip\tabb The neighbors of ${\bf F}_2^{}$ are

\medskip\tabb\tabb ${\bf F}_2^{}+a-b\tab{\bf F}_2^{}+b-c\tab{\bf F}_2^{}+c-a$
\par\tabb\tabb ${\bf F}_2^{}-a+b\tab{\bf F}_2^{}-b+c\tab{\bf F}_2^{}-c+a$
\par\tabb\tabb ${\bf F}_2^{}+d$
\par\tabb\tabb ${\bf F}_2^{}-d$

\medskip\tabb\tabb $e+f-{\bf F}_2^{}-a\tab e+f-{\bf F}_2^{}-b\tab e+f-{\bf F}_2^{}-c$

\medskip\tabb\tabb $e-{\bf B}_1^{}+a\tab e-{\bf B}_1^{}+b\tab e-{\bf B}_1^{}+c$

\medskip\tabb\tabb $f+{\bf B}_1^{}+a\tab f+{\bf B}_1^{}+b\tab f+{\bf B}_1^{}+c$

\eject\tabb The linear incidence conditions are

\medskip\tabb\tabb $V[p]\cap(F[o,q,r]+a-b)\neq\emptyset$
\par\tabb\tabb $V[q]\cap(F[o,r,p]+b-c)\neq\emptyset$
\par\tabb\tabb $V[r]\cap(F[s,p,q]+c-a)\neq\emptyset$
\par\tabb\tabb $E[o,p]\cap(E[s,r]+d)\neq\emptyset$

\medskip\tabb\tabb $F[s,q,r]\cap(e+f-F[s,q,r]-a)\neq\emptyset$
\par\tabb\tabb $F[s,r,p]\cap(e+f-F[s,r,p]-b)\neq\emptyset$
\par\tabb\tabb $F[s,p,q]\cap(e+f-F[s,p,q]-c)\neq\emptyset$

\medskip\tabb\tabb $F[o,p,q]\cap(e-F[o,p,q]+a)\neq\emptyset$
\par\tabb\tabb $F[o,q,r]\cap(e-F[o,q,r]+b)\neq\emptyset$
\par\tabb\tabb $F[o,r,p]\cap(e-F[o,r,p]+c)\neq\emptyset$

\medskip\tabb\tabb $E[o,p]\cap(f+E[q,r]+d+a)\neq\emptyset$
\par\tabb\tabb $E[o,q]\cap(f+E[r,p]+d+b)\neq\emptyset$
\par\tabb\tabb $E[o,r]\cap(f+E[p,q]+d+c)\neq\emptyset$

\medskip\tabb\tabb $(f+F[o,q,r]+d+a)\cap(e-F[o,q,r]+c)\neq\emptyset$

\medskip\tabb\tabb\tabb optimize (1 free parameter)

\medskip\tabb The lattice vectors are

\medskip\tabb\tabb $a = {1\over 4104}\<-3069+11\sqrt{396129},-2382+2\sqrt{396129},-3369-1\sqrt{396129}\>$
\par\tabb\tabb $b = {1\over 4104}\<+17694-34\sqrt{396129},+27651-37\sqrt{396129},+3777-7\sqrt{396129}\>$
\par\tabb\tabb $c = {1\over 4104}\<-14625+23\sqrt{396129},-25269+35\sqrt{396129},-408+8\sqrt{396129}\>$
\par\tabb\tabb $d = {1\over 4104}\<+22629-19\sqrt{396129},-30903+65\sqrt{396129},+4152+8\sqrt{396129}\>$
\par\tabb\tabb $e = {1\over 4104}\<+4131-5\sqrt{396129},+2382-2\sqrt{396129},-3777+7\sqrt{396129}\>$
\par\tabb\tabb $f = {1\over 1026}\<-2838+2\sqrt{396129},+2865-7\sqrt{396129},+1116-4\sqrt{396129}\>$

\medskip\tabb The lattice volume and packing density are

\medskip\tabb\tabb $V = \det[d+a,d+b,d+c] = {1\over 701784}(97802181-132043\sqrt{396129})$
\par\tabb\tabb $\phi = 6\cdot{8\over 3}/V = 11228544/(97802181-132043\sqrt{396129})$

\bigskip{\bf D10}\tab The clusters are

\medskip\tabb\tabb $+{\bf E}_5^{} = +({\bf B}_1^{}\cup{\bf E}_4^{})$
\par\tabb\tabb $-{\bf E}_5^{} = -({\bf B}_1^{}\cup{\bf E}_4^{})$

\medskip\tabb\tabb\tabb ${\bf B}_1^{} = B[t,u,v,w]$
\par\tabb\tabb\tabb ${\bf E}_4^{} = B[o,p,v,w]\cup B[p,q,v,w]\cup B[q,r,v,w]\cup B[r,s,v,w]$

\medskip\tabb The cluster vertices are

\medskip\tabb\tabb $o = \sqrt 6\;\<-{4\over 9},-{4\over 9},-{7\over 9}\>$
\par\tabb\tabb $p = \sqrt 6\;\<-{2\over 3},-{2\over 3},+{1\over 3}\>$
\par\tabb\tabb $q = \sqrt 6\;\<0,0,+1\>$
\par\tabb\tabb $r = \sqrt 6\;\<+{2\over 3},+{2\over 3},+{1\over 3}\>$
\par\tabb\tabb $s = \sqrt 6\;\<+{4\over 9},+{4\over 9},-{7\over 9}\>$

\eject\tabb\tabb $t = \<+1,+1,-2\>$
\par\tabb\tabb $u = \<-1,-1,-2\>$
\par\tabb\tabb $v = \<-1,+1,0\>$
\par\tabb\tabb $w = \<+1,-1,0\>$

\medskip\tabb The neighbors of ${\bf E}_5^{}$ are

\medskip\tabb\tabb ${\bf E}_5^{}+a+b\tab{\bf E}_5^{}+b+c\tab{\bf E}_5^{}+c+a$
\par\tabb\tabb ${\bf E}_5^{}-a-b\tab{\bf E}_5^{}-b-c\tab{\bf E}_5^{}-c-a$

\medskip\tabb\tabb $d-{\bf E}_5^{}+a\tab d-{\bf E}_5^{}+b\tab d-{\bf E}_5^{}+c$
\par\tabb\tabb $d-{\bf E}_5^{}-a\tab d-{\bf E}_5^{}-b\tab d-{\bf E}_5^{}-c$
\par\tabb\tabb $d-{\bf E}_5^{}+a+b+c$
\par\tabb\tabb $d-{\bf E}_5^{}-a-b-c$

\medskip\tabb The linear incidence conditions are

\medskip\tabb\tabb $V[t]\cap(F[p,q,v]+a+b)\neq\emptyset$
\par\tabb\tabb $E[r,v]\cap(E[p,w]+b+c)\neq\emptyset$
\par\tabb\tabb $F[q,r,w]\cap(V[u]+c+a)\neq\emptyset$

\medskip\tabb\tabb $F[t,u,v]\cap(d-F[t,u,v]+b)\neq\emptyset$
\par\tabb\tabb $F[t,u,w]\cap(d-F[t,u,w]-c)\neq\emptyset$
\par\tabb\tabb $F[o,p,v]\cap(d-F[o,p,v]-a)\neq\emptyset$
\par\tabb\tabb $F[r,s,w]\cap(d-F[r,s,w]+a)\neq\emptyset$
\par\tabb\tabb $F[q,r,v]\cap(d-F[q,r,v]+c)\neq\emptyset$
\par\tabb\tabb $F[p,q,w]\cap(d-F[p,q,w]-b)\neq\emptyset$
\par\tabb\tabb $F[r,s,v]\cap(d-F[r,s,v]+a+b+c)\neq\emptyset$
\par\tabb\tabb $F[o,p,w]\cap(d-F[o,p,w]-a-b-c)\neq\emptyset$

\medskip\tabb\tabb\tabb optimize (1 free parameter)

\medskip\tabb The lattice vectors are

\medskip\tabb\tabb $a = {1\over 21540}\<+16713+14822\sqrt 6,-25455+13840\sqrt 6,0\>$
\par\tabb\tabb $b = {1\over 7180}\<-2656+561\sqrt 6,+14440-2385\sqrt 6,-7285-4835\sqrt 6\>$
\par\tabb\tabb $c = {1\over 7180}\<-2656+561\sqrt 6,+14440-2385\sqrt 6,+7285+4835\sqrt 6\>$
\par\tabb\tabb $d = {1\over 7180}\<0,0,-4339+1889\sqrt 6\>$

\medskip\tabb The lattice volume and packing density are

\medskip\tabb\tabb $V = 2\det[a,b,c] = {1\over 370146232}(7885808912+1639809563\sqrt 6)$
\par\tabb\tabb $\phi = 10\cdot{8\over 3}/V = 29611698560/(23657426736+4919428689\sqrt 6)$

\end{appendix}

\end{document}